%% file: reading_saxe.tex
\newif\ifrims
\rimstrue
\rimsfalse

\ifrims
\documentclass[secnum,leqno]{rims-bessatsu}
\else
\documentclass[a4paper,11pt]{article}

\fi

\usepackage{amssymb,amsmath,amsfonts,amsopn}
\usepackage{graphics}
\usepackage[dvips]{graphicx}
\usepackage{mathrsfs}		
\usepackage{algorithmic}
\usepackage{algorithm}
\usepackage{color}
\usepackage{verbatim}
\usepackage{hyperref}
\usepackage{url}
\usepackage{eepic}
\usepackage{epic}
\usepackage{tikz}
\usetikzlibrary{graphs,quotes,positioning}
\input{definecolors.tex}

\ifrims
\newtheorem{thm}{Theorem}[section]

\newtheorem{prp}[thm]{Proposition}

\theoremstyle{definition}

\theoremstyle{remark}
\else
\usepackage{amsmath}
\usepackage{amssymb}
\usepackage{mathtools}
\usepackage{amsthm}
\newtheorem{result}{\ }[section]
\theoremstyle{changebreak}                
\newtheorem{thm}[result]{Theorem}

\newtheorem{prop}[result]{Proposition}

\fi

\newcommand{\ran}[1]{\mathsf{ran}\,{#1}}

\ifrims
\title{On Saxe's theorems about the complexity of the Distance Geometry Problem}
\author{Ma\"el \textsc{Kupperschmitt}\footnote{Ecole Polytechnique, Palaiseau, France.\newline e-mail: \texttt{mael.kupperschmitt--le-flao@polytechnique.edu}}
          ~and Leo \textsc{Liberti}\footnote{LIX CNRS, Ecole Polytechnique, Institut Polytechnique de Paris, Palaiseau, France. \endgraf e-mail: \texttt{leo.liberti@polytechnique.edu}}}
\AuthorHead{Kupperschmitt \& Liberti}         
\classification{}
\support{}  
\keywords{\textit{}}         
\VolumeNo{x}           
\YearNo{202x}           
\PagesNo{000--000}      
\communication{Received April 20, 202x. Revised September 11, 202x.}      
\begin{document}
%
\maketitle

\else
\begin{document}

\thispagestyle{empty}
\begin{center} 

{\LARGE On Saxe's theorems about the complexity of the Distance Geometry Problem}
\par \bigskip
{\sc Ma\"el Kupperschmitt${}^{1}$, Leo Liberti${}^{2}$} 
\par \bigskip
\begin{minipage}{15cm}
\begin{flushleft}
{\small
\begin{itemize}
\item[${}^1$] {\it \'Ecole Polytechnique, Institut Polytechnique de Paris, F-91129 Palaiseau, France} \\ Email:\url{mael.kupperschmitt--le-flao@polytechnique.edu}
\item[${}^2$] {\it LIX CNRS, \'Ecole Polytechnique, Institut Polytechnique de Paris, F-91128 Palaiseau, France} \\ Email:\url{leo.liberti@polytechnique.edu}
\end{itemize}
}
\end{flushleft}
\end{minipage}
\par \medskip \today
\end{center}
\par \bigskip
\fi

\begin{abstract}      
In 1979, James B.~Saxe published an extended summary on the complexity of the Distance Geometry Problem in the proceedings of the 17th Allerton Conference \cite{saxe79}. Many of the proofs in his paper are sketches, and even the whole proofs do not have all the details. In this paper we provide a commentary to Saxe's results and hopefully more understandable versions thereof. 
\end{abstract}


\section{Introduction}
\label{s:intro}
This paper aims at clarifying the results about the computational complexity of the Euclidean Distance Geometry Problem (EDGP) given in \cite{saxe79}, which is an ``extended summary'' published in a 1979 conference proceedings volume. The full archival paper was never published after the conference. The first part of the technical report \cite{saxe80}, however, looks like it might have been a candidate to the missing full paper. Unfortunately, though, many proofs in \cite{saxe80} are as sketchy as those of the extended summary \cite{saxe79}, which also contains some graphical errors in \cite[Fig.~4.1]{saxe79} (corrected in \cite{saxe80}). 

Given $K>0$, the EDGP is the inverse problem to finding a subset of Euclidean distances (with adjacencies) given a set of points in a Euclidean space $\mathbb{R}^K$. It can also be described as drawing a weighted graph in $\mathbb{R}^K$ so that the vertices are points and the edges are segments having the same length as the edge weights. Formally, given $K>0$ and a simple undirected edge-weighted graph $G=(V,E,d)$ (where $d:E\to\mathbb{R}_+$ is the edge weight function), one must decide whether there exists a \textit{realization} $x:V\to\mathbb{R}^K$ such that:
\begin{equation}
  \forall \{u,v\}\in E \quad \|x_u - x_v\|_2^2 = d_{uv}^2. \label{dgp}
\end{equation}
Note that both sides are squared turn Eq.~\eqref{dgp} into a quadratic (polynomial) equation.

The EDGP is natively a decision problem; realizations can be used as certificates for YES instances in an appropriate computational model (see Sect.~\ref{s:compmodels}). A realization $x$ can be represented by an $n\times K$ matrix, where $n=|V|$ and the $v$-th row of $x$ is a $K$-vector representing the point position of the vertex $v$ in $\mathbb{R}^K$. If $K=1$, then a realization is simply an $n$-vector the $v$-th component of which is a scalar that indicates the position of $v$ on the real line. If $K$ is fixed, the resulting problem is denoted EDGP${}_K$. The generalization of the EDGP to other norms than $\ell_2$ is known as the Distance Geometry Problem (DGP), but most of the well-known applications use the $\ell_2$ norm \cite{dgp-sirev,dgbook,dgds}.

This paper falls in the literary category of commentaries, which are extremely common in religious studies and fairly common in the rest of the human sciences. In mathematics, this genre is typical of the writings about its history. Our aim, however, is not quite historical\footnote{For a commentary about the history of mathematics related to the DGP, see \cite{six}.}. Our motivation to re-examine the complexity of the EDGP according to \cite{saxe79} is that it supports a certain line of research we are undertaking: we want to employ complexity reductions from various problems to the EDGP in order to use its solution algorithms to solve various \textbf{NP}-hard problems \cite{dgcomp1}. 

The rest of this paper is organized as follows. We introduce preliminary notions in Sect.~\ref{s:prelim}. The content of Saxe's papers is summarized, and some critical points are discussed, in Sect.~\ref{s:critique}. In Sect.~\ref{s:innp} we provide an in-depth discussion of the membership of EDGP${}_1$ in \textbf{NP}. The main part of this paper is Sect.~\ref{s:npcomplete}, where we discuss the reduction from \textsc{3sat} to the EDGP${}_1$. The extension of this reduction to the EDGP${}_K$ is criticized in Sect.~\ref{s:edgpK}, while the extension to edge weights in $\mathbb{R}$ rather than $\mathbb{N}$ is criticized in Sect.~\ref{s:realEDGP}. In Sect.~\ref{s:ambiguous} we summarize the extension to the ``ambiguous'' EDGP variant consisting of determining whether an EDGP instance has a single or many incongruent realizations. Sect.~\ref{s:concl} concludes the paper.

\subsection{Preliminary notions}
\label{s:prelim}
We denote the range of a function $f$ by $\ran{f}$. A function $\gamma:\mathbb{R}^K\to\mathbb{R}^K$ is a \textit{congruence} if, given any $S\subset\mathbb{R}^K$, for all $x,y\in S$ we have $\|x-y\|_2=\|\gamma(x)-\gamma(y)\|_2$. Rotations, translations, and reflections are all congruences. A pair $(\psi,G)$ where $\psi:\mathbb{R}^K\to\mathbb{R}^K$ and $G=(V,E)$ with $|V|=n$ is a simple undirected graph is an \textit{isometry} if, given any finite sequence $S=(x_1,\ldots,x_n\;|\;\forall i\le n\;x_i\in\mathbb{R}^K)$ such that $|S|=|V|$, for all $\{u,v\}\in E$ we have $\|x_u-x_v\|_2=\|\psi(x_u)-\psi(x_v)\|_2$. With a small abuse of terminology we say that an isometry defined on a complete graph is a congruence. 

\subsubsection{Models of computation}
\label{s:compmodels}
For a definition of a Turing Machine (TM) see \cite{mdavis}. A \textit{nondeterministic} TM is a TM that is capable of following each branch of a test concurrently at no additional computational cost; or, equivalently, an ideal parallel computer with an unbounded number of processors. A model of computation equivalent to the TM but more similar to current computers is the Random Access Machine (RAM) \cite{minsky}, which is based on registers holding values. The real RAM \cite{shamos} is an abstract model of RAM where each register can hold a real number. See \cite{blum} for notions of computability and complexity concerning the real RAM.

\subsubsection{Computational complexity}
\label{s:compcompl}
An algorithm is thought of as a function $A$ mapping an input $\iota$ to an output. In this paper we only look at decision problems, so the output is a single bit (YES or NO). The collection of all $\iota$ that $A$ can accept as a valid input is a \textit{problem}; we say that a problem consists of \textit{instances}\footnote{Usually we assume that a problem has infinitely many instances, otherwise problems become algorithmically trivial from a theoretical point of view: one can pre-compute solutions for the constant number of instances, then provide them to users in constant time.}. The computational complexity of a terminating algorithm is the number of elementary steps, in function of the amount of memory necessary to store $\iota$, that it performs before termination. Since problems can be solved by more than a single algorithm, we define the complexity of a problem as the complexity of the best algorithm that can solve it. Problems with a polynomial complexity are known as \textit{tractable}\footnote{Polynomial complexity is invariant to many natural interpretations of a given computational model: e.g.~constant number of tapes/heads in a TM, constant number of registers in a RAM \cite{cobham}.}. The class of tractable decision problems is known as \textbf{P}.

An interesting class of decision problems which we think have computational complexity larger than any polynomial is that of Nondeterministic Polynomial-time (\textbf{NP}): its problems can be solved in polynomial-time only by a nondeterministic TM, which can be thought of as a TM with an unbounded number of heads/tapes. A decision problem $P$ is in \textbf{NP} if there is a polynomial-time algorithm for a nondeterministic TM that solves $P$. An alternative definition of \textbf{NP} is that each of its problems $P$ have two properties: (i) all YES instances $p\in P$ come with a \textit{certificate} $c$ that proves that $p$ is a YES instance; (ii) $P$ comes with a polynomial-time \textit{verifier} algorithm $V_P(p,c)$ that outputs $\mathsf{TRUE}$ if the $c$ indeed proves that $p$ is YES. Many interesting problems fall in the class \textbf{NP}, and for many of which we do not know of any polynomial-time algorithm that solve them.

To characterize these problems as ``difficult'' (as opposed to the tractable ones in \textbf{P}) we use the notion of \textit{hardness} for the class \textbf{NP}: a problem $P\in\mathbf{NP}$ is \textbf{NP}-hard if, for all problems $Q\in\mathbf{NP}$, one can find a polynomial-time algorithm $\rho_{QP}$ that transforms each instance $q\in Q$ into an instance $p\in P$ such that $p$ is YES if and only if $q$ is YES\footnote{$Q$ is known as the \textit{source} problem and $P$ as the \textit{target} problem.}. If we assume that $Q$ is more difficult than $P$, then we could apply $\rho_{QP}$ to each instance of $Q$ and obtain a corresponding instance of $P$ together with the YES/NO answer using the easier algorithm for $P$, which contradicts the assumption. Therefore $Q$ can be at worst just as difficult as $P$. From this qualitative point of view, $P$ must then be hardest with respect to the class \textbf{NP}. We know that there are many \textbf{NP}-hard problems, and we also know that there are \textbf{NP}-hard problems that do not even belong to \textbf{NP}. Those that do are known as \textbf{NP}-complete.

A \textit{pseudo-polynomial-time} algorithm is polynomial in the values of the numbers of its input rather than in the amount of storage necessary to encode them (the input length). E.g.~if the input consists of $n$ integers each with maximum value $M$, then an $O(nM)$ algorithm is pseudo-polynomial-time. Since the input can be stored in $O(n\log_2 M)$ bits, the running time of a pseudo-polynomial-time algorithm could be exponential in the input length. If there is a pseudo-polynomial-time algorithm for solving $Q$, then $P$ is \textit{weakly} \textbf{NP}-hard, otherwise it is \textit{strongly} \textbf{NP}-hard.

See \cite{papacomplexity} for more information about computational complexity.

\subsubsection{Satisfiability}
\label{s:sat}
The satisfiability (\textsc{sat}) problem asks to find an assignment of TRUE/FALSE boolean values to \textit{logical variables} $s_1,\ldots,s_n$ that must satisfy a given logical expression involving AND, OR, NOT operations. In this paper we assume that the expression is in conjunctive normal form (CNF), i.e.~it consists of a conjunction of \textit{clauses}, each of which consists of a disjunction of \textit{literals}, each of which is either a variable $s_j$ or its negated form $\bar{s}_j$ (where $j\le n$). A \textsc{sat} instance therefore looks like:
\begin{equation}
  \bigwedge_{i\le m} \big(\bigvee_{j\in C^+_i} s_j \vee \bigvee_{j\in C^-_i} \bar{s}_j\big),
\end{equation}
where $C^+_i$ is an index set of variables and $C^-_i$ is an index set of negated variables occurring in the $i$-th clause. The $k$\textsc{sat} problem limits the number of literals in each clause to at most $k$ (or exactly $k$: the two requirements are equivalent because repetition of a literal in a clause does not change the truth value of the expression). Thus, \textsc{3sat} has at most three literals per clause. The \textsc{sat} and \textsc{3sat} problems are often used as source problems in polynomial reductions to prove \textbf{NP}-hardness. We recall that $k$\textsc{sat} is in \textbf{P} for $k\le 2$ \cite{papacomplexity} and \textbf{NP}-complete \cite{cook} for $k\ge 3$.

\subsubsection{Graphs and their rigidity}
Given a graph $G=(V,E)$ and a subgraph $H$ of $G$ we let $V(H)$ be the set of vertices of $H$ and $E(H)$ the set of edges of $H$; for a subset of vertices $U\subseteq V$ we define $G[U]$ to be the subgraph induced by $U$, i.e.~the graph $(U,F)$ where $F\subseteq E$ such that for all $u,v\in U$ we have $\{u,v\}\in F$ if $\{u,v\}\in E$.

An EDGP instance $(K,G)$ has either zero or uncountably many realizations. This occurs because, if there is a realization $x$ of $G$ in $\mathbb{R}^K$, then every translation, reflection (for any $K>0$) and rotation (for $K>1$) of $x$ is another realization. If we define an equivalence relation $x \approx y$ based on $y$ being a congruent image of $x$, and we consider the set of realizations of $G$ in $\mathbb{R}^K$ modulo $\approx$, then the EDGP instance may have zero, finitely many, or uncountably many realizations \cite[\S 3.2]{dgds}. For YES instances of the EDGP, given a realization $x$ of $G$ in $\mathbb{R}^K$, the pair $(G,x)$ is a \textit{framework}. The framework is \textit{rigid} if there are finitely many realizations modulo congruences, and \textit{flexible} if there are uncountably many. If a rigid framework has a unique realization, it is \textit{globally rigid}. EDGP instances leading to globally rigid frameworks are highly desirable in applications where one tries to reconstruct the position of vertices from a sparse set of distances, since there is usually a single reality that gave rise to the distance measurements\footnote{There are exceptions to the desirability of globally rigid framework, for example the reconstruction of molecule shapes from distance measurements is naturally ambiguous because of isomers: different isomers have different properties, and it may be useful to find all possible isometric but incongruent frameworks of a given EDGP.}.

\section{Contents of Saxe's paper \cite{saxe79}}
\label{s:critique}
Saxe's 1979 extended summary is titled ``Embeddability of Weighted Graphs in k-Space is Strongly NP-Hard''. Today, the term ``embeddability'' is perceived to have a wider scope than Eq.~\eqref{dgp}, because graphs may be embedded on manifolds with various topologies \cite{lando}. Saxe motivates his study of this problem with the application to distributed sensor networks, in the same spirit as \cite{yemini78}. Wireless sensors may communicate in a peer-to-peer fashion with other nearby sensors, and estimate the distance by means of the quantity of battery used for the exchange: this yields an edge-weighted graph (the input of the EDGP) in two or three dimensions.

The given EDGP instance $(K,G)$ is rational: $K\in\mathbb{N}$, $V\subset\mathbb{N}$, $E$ can be encoded by a boolean matrix, and $d$ maps edges to their weights in $\mathbb{Q}$. These rational weights can be reduced to integer by scaling by their minimum common multiple. We assume that $G$ is connected, otherwise every connected component can be realized independently. Saxe shows that that EDGP${}_1$ is in \textbf{NP} by giving a natural, direct solution algorithm for a nondeterministic TM. 

The main contributions of \cite{saxe79} are: (i) the weak \textbf{NP}-hardness of the EDGP${}_1$; (ii) the strong \textbf{NP}-hardness of the EDGP${}_1$; (iii) an extension of (ii) to the EDGP${}_K$ for any $K>0$; (iv) an extension of (ii) to EDGP instances with real (rather than rational) input, for which approximate realizations are accepted as a YES certificates; (v) the strong \textbf{NP}-hardness of the ``ambiguous problem'' of the EDGP (let us call this A-EDGP), i.e.~that of determining whether, given an EDGP instance $(K,G)$ and a realization $x\in\mathbb{R}^K$, there exists another realization $y$, incongruent to $x$, that satisfies Eq.~\eqref{dgp} for the instance $(K,G)$. 


The weak \textbf{NP}-hardness of EDGP${}_1$ was independently discovered by reduction from \textsc{partition} by Yemini\footnote{In his presentation, Yemini refers to the EDGP${}_2$, although in his reduction from \textsc{partition} the second dimension is irrelevant. Although formally one can infer (by inclusion) the weak \textbf{NP}-hardness of EDGP${}_2$ from Yemini's proof, in practice the proof really only works for the EDGP${}_1$.} \cite{yemini}, who called it the ``reflection problem'' (applicable to a certain subclass of instances of EDGP${}_2$, which is itself equivalent to EDGP${}_1$). Both authors acknowledge each other's independent discovery. The proof of this result is in fact self-contained, but its compactness obscures its meaning. A didactical exposition of this proof can be found in \cite[\S 2.4.2]{dgbook}; therefore we only recall the essence of this result. A \textsc{partition} instance on $n$ integers consists of $n$ integers $a_1,\ldots,a_n$; an instance is YES iff there is a subset $I\subseteq\{1,\ldots,n\}$ such that $\sum_{i\in I} a_i=\sum_{i\not\in I} a_i$. The instance is reduced to a single-cycle graph on vertices $V=\{1,\ldots,n+1\}$ and edges $\{v,v+1\}$ (for $1\le v\le n$), each with weight $a_v$. YES instances of \textsc{partition} reduce to single cycle graphs that can be realized so that $x_{n+1}=x_1$ (which closes the cycle). Vertices are realized so that $x_{v+1}$ is on the right (resp.~left) of $x_v$ if $a_v$ is in the left (resp.~right) hand side sum $\sum_{v\in I} a_v=\sum_{v\not\in I} a_v$ of the \textsc{partition} certificate. Since \textsc{partition} is weakly \textbf{NP}-complete, this proof only shows the weak \textbf{NP}-completeness of EDGP${}_1$.

The strong \textbf{NP}-hardness proof for EDGP${}_1$ is by reduction from \textsc{3sat}: each clause is represented by a gadget\footnote{In complexity reductions from a problem $P$ to another problem $Q$, sometimes the basic entities of $P$ are reduced to more or less complicated relations of entities in terms of $Q$, that are called ``gadgets''; if $Q$ is expressed using a graph, then the graph might be the resulting graph may be formed by subgraphs each of which is a gadget graph representing an entity in $P$.} graph, having all edge weights in the set $\{1,2,3,4\}$, on three literals that are related by another gadget (having edge weights in the set $\{1,2\}$). The whole graph is fixed with respect to translations and reflections by anchoring (i.e., fixing) two of its vertices to given positions. Both in \cite{saxe79,saxe80} the proof essentially states that the reader should study the gadgets carefully (the most complicated of which is wrong in \cite{saxe79} but corrected in \cite{saxe80}) to convince herself of its validity. In this paper we provide an explicit proof of this ``careful study'' step. Saxe also offers a second version of the clause gadget that decreases the set of edge weights of the clause gadgets to just $\{1,2\}$, by replacing the edges with length $3$ and $4$ by further graph gadgets.

An inductive gadget construction and simplification shows, basically by the same reduction from \textsc{3sat}, that EDGP${}_K$ is \textbf{NP}-hard for all $K>0$. If one understands the reduction from \textsc{3sat} to EDGP${}_1$, it is intuitively (albeit not perhaps formally) clear that this inductive construction works. 

Next, Saxe introduces the notion of approximate realizability in graphs $G$ where some of the edge weights are irrational. First, for a given $0<\varepsilon<1$, a mapping $x:V\to\mathbb{R}^K$ is a \textit{$\varepsilon$-approximate realization} if:
\begin{equation}
  \forall \{u,v\}\in E\quad (1-\alpha) d_{uv} \le \|x_u-x_v\|_2 \le (1+\alpha) d_{uv}. \label{approxdgp}
\end{equation}
Based on Eq.~\eqref{approxdgp}, Saxe defines an $(\epsilon,\delta)$-approximate EDGP is as follows: given $0\le\epsilon<\delta<1$ and an EDGP instance $(K,G)$, determine whether there is a $\delta$-approximate realization, or there is no $\epsilon$-approximate realization. Saxe notes that this problem excludes instances for which neither condition holds, and suggests that such instances can be declared YES or NO arbitrarily: Saxe concedes that the problem is somewhat ill-defined. By means of his \textsc{3sat} to EDGP${}_1$ reduction, Saxe proves that the $(1/18,1/9)$-approximate EDGP is \textbf{NP}-hard, by noting that the gadgets in the version of the proof where edge weights are in $\{1,2\}$ there are cycles of length at most $16$: the proof is not clearly understandable. This result is generalized to a more helpful $(\delta,\epsilon)$-approximate EDGP being \textbf{NP}-hard for all $0\le\delta\le\epsilon<1/8$ in \cite{saxe80}. Saxe also notes that the weak reduction from \textsc{partition} to a single cycle graph does not lend itself to this analysis since there are FPTAS for \textsc{partition} \cite{ibarra}, which would make it impossible to claim that approximate realizations are hard to find.

The proof of the \textbf{NP}-completeness of the A-EDGP${}_1$ is sketched in \cite{saxe79} as follows: (1) define ambiguous problems related to \textsc{3sat} and \textsc{4sat} similarly to the A-EDGP${}_1$ (let us call them, respectively, \textsc{a-3sat} and \textsc{a-4sat}); (2) \textbf{NP}-completeness of the \textsc{a-4sat} follows by reduction from \textsc{3sat}; (3) \textbf{NP}-completeness of \textsc{a-3sat} is by reduction from \textsc{a-4sat}; (4) the strong \textbf{NP}-completeness of A-EDGP${}_1$ is by reduction from \textsc{a-3sat}, based on a uniqueness property of the \textsc{3sat} to EDGP${}_1$ reduction; (5) for any $K>1$, the strong \textbf{NP}-hardness of A-EDGP${}_K$ follows by reduction from A-EDGP${}_{K-1}$. We think that this is the only result of \cite{saxe79} that is actually clarified and expanded in \cite{saxe80}.


\section{Membership of EDGP${}_1$ in \textbf{NP}}
\label{s:innp}
As is well known, there are two equivalent definitions for a problem $P$ to belong to the class \textbf{NP} \cite{papacomplexity}: the first is that there must exists a polynomial-time algorithm running on a nondeterministic TM that solves all instances of $P$. The second is that every YES instance should possess a certificate that can be verified in polynomial time by a deterministic TM.

Saxe's proof that EDGP${}_1$ is in \textbf{NP} is based on the first definition. We provide a clearer proof here.
\ifrims\begin{prp}\else\begin{prop}\fi
  There is a nondeterministic polynomial-time TM for solving EDGP${}_1$.
  \label{prop:ndtm}
\ifrims\end{prp}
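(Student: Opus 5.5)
The plan is to describe a nondeterministic TM that guesses, for each edge, on which side of its endpoint the other endpoint lies, and then checks deterministically that these guesses are consistent. I assume, as in Sect.~\ref{s:critique}, that $G=(V,E,d)$ is connected and has integer weights; if not, each component is treated separately and rational weights are scaled by the least common multiple.

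The key observation is that in $\mathbb{R}^1$ the constraint $|x_u-x_v|=d_{uv}$ holds if and only if $x_v-x_u=\sigma_{uv}d_{uv}$ for some sign $\sigma_{uv}\in\{-1,+1\}$. Realizations are invariant under translation, so I fix the root vertex $r$ of a spanning tree $T$ at $x_r=0$. Every realization then becomes integer-valued, with $|x_v|\le\sum_{e\in E}d_e$, because each $x_v$ is a signed sum of weights along the tree path from $r$ to $v$.

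The machine runs in three phases.
\begin{enumerate}
\item It computes a spanning tree $T$ of $G$ by BFS, which takes deterministic polynomial time.
\item It nondeterministically guesses one sign $\sigma_e\in\{-1,+1\}$ for each tree edge. This is $n-1$ binary branches.
\item It propagates positions from $x_r=0$ along $T$, setting $x_v=x_u+\sigma_{uv}d_{uv}$. It then checks every edge $\{u,v\}\in E$ by verifying $|x_u-x_v|=d_{uv}$, and accepts if and only if all checks succeed.
\end{enumerate}

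Correctness follows in both directions. If the machine accepts, the computed $x$ satisfies Eq.~\eqref{dgp}. Conversely, if some realization $y$ exists, then $y-y_r$ is also a realization, and the branch guessing $\sigma_{uv}=\mathrm{sign}(y_v-y_u)$ on the tree edges reproduces it exactly. Edge weights are positive, so each sign is well defined; if zero weights were allowed, either choice of sign would work.

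The step I expect to need the most care is the running-time bound in the bit model, because the positions are numbers and not abstract reals. Each $x_v$ is a sum of at most $n-1$ input integers, so its bit length is at most $\log_2 n$ plus the maximum weight bit length. The additions and comparisons are therefore polynomial in the input size, and the whole computation on each branch is polynomial.

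Equivalently, the guessed sign vector is a certificate of $n-1$ bits that can be verified in polynomial time. This gives membership in \textbf{NP} under the verifier definition as well.
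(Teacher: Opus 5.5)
Your proof is correct and is essentially the paper's argument. The paper runs a DFS from a vertex fixed at $0$, nondeterministically places each newly reached vertex at $x_u\pm d_{uv}$, and kills a branch when an already-placed neighbour is at the wrong distance; that is exactly guessing signs on a (DFS) spanning tree and checking the non-tree edges, as you do with a BFS tree. Your version separates the guess from the check, and adds two things the paper leaves implicit: the bit-length bound on the positions (its $O(n+m)$ count is unit-cost), and the $(n-1)$-bit sign-vector certificate (its later verifier discussion works with rational realizations instead).
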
\else\end{prop}\fi
\begin{proof}
The algorithm computes a realization $x_v$ for each $v\in V$, where $G=(V,E)$ is assume connected (otherwise the algorithm is deployed on each connected component). The algorithm proceeds with a depth-first search (DFS) from an arbitrary starting vertex $z$ placed at $x_z=0$. Then each edge $\{z,v\}$ in the star of $z$ is considered, and $v$ is placed at a position $x_v=x_z\pm d_{zv}$ (each $\pm$ alternative is carried out concurrently since our TM is nondeterministic). Suppose $u$ is inductively placed at $x_u$ in such a way that the set $U$ of vertices already considered before $u$ define an induced subgraph $H=G[U]$ of $G$ that is consistently realized in $\mathbb{R}$ (i.e.~all edges are realized as segments of length equal to their weights). Then for each edge $\{u,v\}$ in the star of $u$, either $v\in V(H)$ or $v\not\in V(H)$. If $v\in V(H)$ then either $|x_u-x_v|=d_{uv}$ or not. If yes, the computation continues, otherwise this particular computation branch stops (recall the TM is nondeterministic). If $v\not\in V(H)$ then $x_v=x_u\pm d_{uv}$, where the two alternatives are carried out concurrently. If at least one computation branch visits all vertices, the the EDGP${}_1$ instance is YES and the realization $x$ is a certificate. Conversely, if all computation branches terminate prematurely, the instance is NO. This algorithm takes $O(n+m)$ on a nondeterministic TM. 
\end{proof}
We remark that, if the algorithm in Prop.~\ref{prop:ndtm} is instead deployed on a \textit{deterministic} TM, it leads to a worst-case exponential-time branching algorithm for the EDGP that is called Branch-and-Prune (BP) \cite{lln5}. 

A proof based on the other definition of \textbf{NP}  must address a numerical difficulty given by the fact that, when using the realization of a YES instance as a certificate, the realization components may not be rational even if the instance is rational, at least for $K>1$ (consider for example an equilateral triangle in 2D). Therefore we must show that, for $K=1$, there is always a rational realization for every YES instance. Let $x$ be a realization of the connected EDGP${}_1$ instance graph $G$, and assume $x$ has at least one irrational component. Since all distances are rational and the graph is connected, every other component must also be rational, otherwise there would be at least one irrational distance, which is not the case. Then if one irrational component is translated to any rational number, all other components will also be translated to rational numbers, again because all distances are rational.

We remark that, because Eq.~\eqref{dgp} is a system of (quadratic) polynomial equations, all realizations of EDGP instances must only involve algebraic components, which are finitely representable. So there may be a chance that the EDGP${}_K$ is in \textbf{NP} even for $K>1$, but this is unknown. Some basic attempts in this sense have failed \cite{dgpinnp}. 

\section{EDGP${}_1$ is strongly \textbf{NP}-hard}
\label{s:npcomplete}
In the reduction from \textsc{3sat} to EDGP${}_1$, Saxe presents three graph gadgets, two of which are simpler (for variables and literals), and one is more complicated (for clauses). We think that it makes more sense to merge the two simpler ones, see Fig.~\ref{fig:saxe12}. 
\begin{figure}[!ht]
  \begin{center}
\begin{tikzpicture}[
    vertex/.style={circle,draw,fill=blue!20,minimum size=15pt,inner sep=0pt},
    edge/.style={thick},
    scale=1.2]
  \node[vertex](A) at (1.5,4) {$\mathsf{A}$};
  \node[vertex](B) at (4.5,4) {$\mathsf{B}$};
  \node[vertex] (u1) at (0, 3) {$s_1$};
  \node[vertex] (u2) at (0, 2) {$s_2$};
  \node[vertex] (u3) at (0, 0.5) {$s_n$};  
  \node[vertex] (v1) at (3, 3) {$\bar{s}_1$};
  \node[vertex] (v2) at (3, 2) {$\bar{s}_2$};
  \node[vertex] (v3) at (3, 0.5) {$\bar{s}_n$};
  \draw[edge] (u1) -- (v1) node[midway, above] {2};
  \draw[edge] (u2) -- (v2) node[midway, above] {2};
  \draw[edge] (u3) -- (v3) node[midway, above] {2};
  \draw[edge] (A) -- (u1) node[midway, above] {1};
  \draw[edge] (A) -- (u2) node[above, pos=0.9] {1};
  \draw[edge] (A) -- (u3) node[midway, below, pos=0.7] {1};
  \draw[edge] (A) -- (v1) node[midway, above] {1};
  \draw[edge] (A) -- (v2) node[above, pos=0.9] {1};
  \draw[edge] (A) -- (v3) node[midway, below, pos=0.7] {1};
  \draw[edge] (A) -- (B) node[midway, above] {2};
  \node at (0, 1.3) {$\vdots$};
  \node at (3, 1.3) {$\vdots$};
\end{tikzpicture}
\end{center}
\caption{The gadget for literals used in the \textsc{3sat}$\to$EDGP${}_1$ reduction: the \textsc{3sat} variables appear in vertices as either $s_j$ or $\bar{s}_j$ for $j\le n$.}
\label{fig:saxe12}
\end{figure}
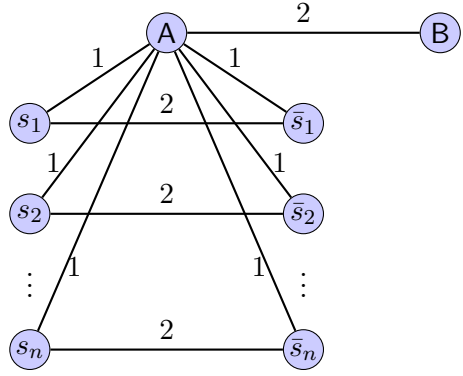
As can be seen from the figure, this gadget introduces two arbitrary vertices $\mathsf{A},\mathsf{B}$ (having fixed realizations at $0$ and $2$ respectively) linked by an edge with weight $2$, and assigns each literal to a vertex. There are edges (with weight equal to $2$) between the vertices for $s_j$ and $\bar{s}_j$, there are edges (with unit weight) between each literal vertex and $\mathsf{A}$.

The gadget graph for \textsc{3sat} clauses is shown in Fig.~\ref{fig:saxe3}. It differs from the gadget shown in \cite{saxe79} (which contains some errors), but it is consistent with the one shown in \cite{saxe80}. 
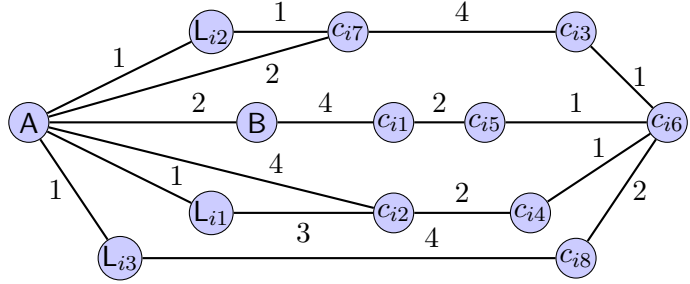
\begin{figure}[!ht]
  \begin{center}
\begin{tikzpicture}[
    vertex/.style={circle,draw,fill=blue!20,minimum size=15pt,inner sep=0pt},
    edge/.style={thick},
    scale=1.2]
  \node[vertex] (A) at (1,4) {$\mathsf{A}$};
  \node[vertex] (B) at (3.5,4) {$\mathsf{B}$};    
  \node[vertex] (c1) at (5,4) {$c_{i1}$}; 
  \node[vertex] (c2) at (5,3) {$c_{i2}$}; 
  \node[vertex] (c3) at (7,5) {$c_{i3}$}; 
  \node[vertex] (c4) at (6.5,3) {$c_{i4}$}; 
  \node[vertex] (c5) at (6,4) {$c_{i5}$}; 
  \node[vertex] (c6) at (8,4) {$c_{i6}$}; 
  \node[vertex] (c7) at (4.5,5) {$c_{i7}$}; 
  \node[vertex] (c8) at (7,2.5) {$c_{i8}$}; 
  \node[vertex] (L1) at (3,3) {$\mathsf{L}_{i1}$}; 
  \node[vertex] (L2) at (3,5) {$\mathsf{L}_{i2}$}; 
  \node[vertex] (L3) at (2,2.5) {$\mathsf{L}_{i3}$}; 
  \draw[edge] (A) -- (B) node[midway,above,pos=0.8] {2};
  \draw[edge] (A) -- (L1) node[midway,above,pos=0.9] {1};
  \draw[edge] (A) -- (c2) node[midway,above,pos=0.7] {4};  
  \draw[edge] (A) -- (L2) node[midway,above] {1};
  \draw[edge] (A) -- (c7) node[midway,below,pos=0.8] {2};  
  \draw[edge] (A) -- (L3) node[midway,left] {1};
  \draw[edge] (L1) -- (c2) node[midway,below] {3};
  \draw[edge] (L2) -- (c7) node[midway,above] {1};
  \draw[edge] (c2) -- (c4) node[midway,above] {2};
  \draw[edge] (c7) -- (c3) node[midway,above] {4};
  \draw[edge] (B) -- (c1) node[midway,above] {4};
  \draw[edge] (c1) -- (c5) node[midway,above] {2};
  \draw[edge] (c5) -- (c6) node[midway,above] {1};
  \draw[edge] (c3) -- (c6) node[midway,right] {1};
  \draw[edge] (c4) -- (c6) node[midway,above] {1};
  \draw[edge] (L3) -- (c8) node[midway,above,pos=0.7] {4};
  \draw[edge] (c6) -- (c8) node[midway,right] {2};
\end{tikzpicture}
\end{center}
\caption{The gadget for clause $i$ used in the \textsc{3sat}$\to$EDGP${}_1$ reduction. There are eight additional vertices for each clause $i\le m$, labelled $c_{i1},\ldots,c_{i8}$. The vertices labeled $\mathsf{L}_{ih}$ represent the $h$-th (positive or negative) literal in clause $i$ for each $h\le 3$, since each clause has exactly three literals.}
\label{fig:saxe3}
\end{figure}

A few remarks are in order.
\begin{itemize}
\item Each clause gadget includes the vertices $\mathsf{A},\mathsf{B}$, which are the same vertices appearing in the gadget for literals: therefore the whole graph, consisting of the union of the literals gadget and the $m$ clause gadgets, is connected. Note that $\mathsf{A},\mathsf{B}$ are \textit{anchors}, i.e.~vertices where the position is given: $x_{\mathsf{A}}=0$ and $x_{\mathsf{B}}=2$.
\item All of the literal vertices introduced in the literals gadget appear in the $i$-th clause gadget whenever the literal appears in the $i$-th clause of the \textsc{3sat} instance: the vertex labelled $\mathsf{L}_{ih}$ (for $h\in\{1,2,3\}$) is the same vertex appearing in Fig.~\ref{fig:saxe12} whenever the literal $\ell_j$ (which may be a positive or negative logical variable) is the $h$-th literal appearing in the $i$-th clause. Thus, the connectivity of the clause gadget graphs is not limited to the vertices $\mathsf{A},\mathsf{B}$. The additional vertices $c_{i1},\ldots,c_{i8}$ define various paths and cycles that guarantee that, for each $i\le m$, the $i$-th clause graph can be realized in $\mathbb{R}$ iff the $i$-th \textsc{3sat} clause is satisfied (this will be proved later).
\item It is \textit{not} the variables that are represented as vertices, but the literals. We will therefore not read the value of $s_j$ from its realized position; rather, the realized positions of the literals corresponding to the vertices labelled $s_j,\bar{s}_j$ will determine the value of the logical variable $s_j$ (for $j\le n$). More precisely, if the vertex labelled $s_j$ is realized at $x_{s_j}=1$, then because of the fixed position of $\mathsf{A},\mathsf{B}$, the vertex labelled $\bar{s}_j$ is realized at $x_{\bar{s}_j}=-1$, which imply that the logical variable $s_j$ will have value TRUE. If, instead, $x_{s_j}=-1$ (implying that $x_{\bar{s}_j}=1$), then $s_j$ will have value FALSE.
\end{itemize}

Let $\mathcal{G}$ be the graph obtained by the union of the graph gadgets in Fig.~\ref{fig:saxe12}-\ref{fig:saxe3}. 
\begin{thm}
  \label{thm:3sat2edgp1}
The \textsc{3sat} instance is YES iff the EDGP${}_1$ instance $\mathcal{G}$ is YES.
\end{thm}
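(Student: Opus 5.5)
The plan is to handle the literal gadget first, then the clause gadget, and to reduce everything to a finite case analysis on the position of $c_{i6}$.

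\emph{Literal gadget.} By translation and reflection we may assume $x_{\mathsf{A}}=0$ and $x_{\mathsf{B}}=2$; the paper treats these as anchors anyway. Each literal vertex is adjacent to $\mathsf{A}$ with weight $1$, so it lies at $\pm1$. The edge $\{s_j,\bar s_j\}$ has weight $2$, which forces $x_{\bar s_j}=-x_{s_j}$. So realizations of the literal gadget correspond exactly to truth assignments, with literal $\ell$ TRUE iff $x_\ell=1$.

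\emph{Clause gadget.} Fix a clause $i$ and write $t_h=x_{\mathsf{L}_{ih}}\in\{\pm1\}$. Propagating along the paths from $\mathsf{A}$, $\mathsf{B}$ and the literals gives the following.
\begin{itemize}
\item Path through $c_{i2}$: $|x_{c_{i2}}|=4$ and $|x_{c_{i2}}-t_1|=3$ force $x_{c_{i2}}=4t_1$.
\item Path through $c_{i7}$: similarly $x_{c_{i7}}=2t_2$.
\item Path $\mathsf{B},c_{i1},c_{i5},c_{i6}$: $x_{c_{i1}}\in\{6,-2\}$, hence $x_{c_{i5}}\in\{8,4,0,-4\}$, hence $x_{c_{i6}}\in S_0=\{9,7,5,3,1,-1,-3,-5\}$.
\item Path through $c_{i2},c_{i4}$: $x_{c_{i6}}\in 4t_1\pm2\pm1$. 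This is $S_1^+=\{7,5,3,1\}$ if $t_1=1$ and $S_1^-=-S_1^+$ otherwise.
\item Path through $c_{i7},c_{i3}$: $x_{c_{i6}}\in 2t_2\pm4\pm1$. This is $S_2^+=\{7,5,-1,-3\}$ if $t_2=1$ and $S_2^-=-S_2^+$ otherwise.
\item Path through $c_{i8}$: $x_{c_{i6}}\in t_3\pm4\pm2$. This is $S_3^+=\{7,3,-1,-5\}$ if $t_3=1$ and $S_3^-=-S_3^+$ otherwise.
\end{itemize}
The gadget is a tree of paths meeting only at $c_{i6}$ (and at the anchors and literal vertices). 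Hence, given $t_1,t_2,t_3$, the clause gadget is realizable iff $S_0\cap S_1^{\pm}\cap S_2^{\pm}\cap S_3^{\pm}\neq\emptyset$. Conversely, any common value can be back-propagated: choose $x_{c_{i6}}$ there, then choose each intermediate vertex as the unique compatible point on its path.

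\emph{Case check.} Next I check the $8$ sign patterns. In the all-FALSE case, $S_1^-\cap S_2^-\cap S_3^-=\{-7\}$, which is disjoint from $S_0$. For the other seven patterns I exhibit a witness:
\begin{itemize}
\item $(+,+,+)$: $7$;
\item $(+,+,-)$: $5$;
\item $(+,-,+)$: $3$;
\item $(-,+,+)$: $-1$;
\item $(+,-,-)$: $1$;
\item $(-,+,-)$: $-3$;
\item $(-,-,+)$: $-5$.
\end{itemize}
Each witness lies in $S_0$ and in the three relevant sets. This enumeration is the main, if routine, obstacle. Care is needed that vertices which receive two constraints (such as $c_{i6}$, and the literals shared among clauses) are treated consistently, which the tree structure guarantees.

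\emph{Conclusion.} ($\Rightarrow$) Given a satisfying assignment, place the literals as above. In each clause at least one literal is TRUE, so a witness for $x_{c_{i6}}$ exists; back-propagate to realize $c_{i1},\dots,c_{i8}$. Clause gadgets share only $\mathsf{A}$, $\mathsf{B}$ and the literal vertices, so these choices are independent across clauses and yield a realization of $\mathcal{G}$.

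($\Leftarrow$) From a realization, normalize the anchors and read off the assignment from the literal positions. If some clause were falsified, its gadget would require $x_{c_{i6}}\in S_0\cap\{-7\}=\emptyset$, a contradiction.

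Strong hardness then follows because all weights lie in $\{1,2,3,4\}$ and the construction is polynomial in size.
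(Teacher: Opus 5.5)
Your proof is correct and follows the paper's argument: you anchor $\mathsf{A},\mathsf{B}$, force $x_{c_{i2}}=4x_{\mathsf{L}_{i1}}$ and $x_{c_{i7}}=2x_{\mathsf{L}_{i2}}$, and intersect the four sets of positions of $c_{i6}$ reachable along the paths from $\mathsf{B}$, $c_{i2}$, $c_{i7}$ and $\mathsf{L}_{i3}$; that intersection is empty exactly in the all-FALSE case. For the converse, the paper lists all coordinates in a table, while you give a witness for $c_{i6}$ and back-propagate along the four internally disjoint chains; this is, if anything, more reliable, since all your witnesses check out, whereas the paper's table has a few entries that violate edge weights (e.g.\ in the $(1,1,1)$ row the values of $c_{i6}$ and $c_{i7}$ are swapped).
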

The text provided in both \cite{saxe79,saxe80} as proof of Thm.~\ref{thm:3sat2edgp1} is simply
\begin{quote}
  {\small ``Careful study of the graph [\dots] will reveal that it is impossible to embed it in the line in such a way that $\mathsf{A}$ is sent to $0$, $\mathsf{B}$ is sent to $2$, and all three of the $\mathsf{L}_{ih}$ are sent to $-1$ (\textsf{FALSE}), but if one or more of the $\mathsf{L}_{ih}$ are sent to $1$ (\textsf{TRUE}), then an embedding is possible.''\par}
\end{quote}
  Since this proof is crucial for most of the subsequent results of \cite{saxe79}, we provide a full proof. 
\begin{proof}
  First, we assume that the \textsc{3sat} instance is NO and suppose, to aim at a contradiction, that the EDGP${}_1$ instance has a realization. Since the \textsc{3sat} instance is NO, there is no satisfying assignment for all clauses: hence at least one clause, say the $i$-th clause, must be such that all its literals are realized at $-1$. Now consider the four paths $P_1 = (\mathsf{A},\mathsf{L}_{i2},c_{i7},c_{i3},c_{i6})$, $P_2 = (\mathsf{A},\mathsf{B},c_{i1},c_{i5},c_{i6})$, $P_3 = (\mathsf{A},\mathsf{L}_{i1},c_{i2},c_{i4},c_{i6})$ and $P_4 = (\mathsf{A},\mathsf{L}_{i3},c_{i8},c_{i6})$ in $\mathcal{G}$: note that they all end at the vertex $c_{i6}$. As we supposed that $\mathcal{G}$ is realizable, regardless of the path we take, their final vertex $c_{i6}$ must be at a unique realized position. Path $P_3$ and the edge weight of $\{\mathsf{A},c_{i2}\}$ being $2$ forces the position of $c_{i2}$ at $-4$; path $P_1$ and edge weight of $\{\mathsf{A},c_{i7}\}$ being $4$ forces the position of $c_{i7}$ at $-2$. Next, we use the edge weights of the four paths and the fixed positions of the anchors $\mathsf{A},\mathsf{B}$ to compute the possible positions of $c_{i6}$. $P_1$ allows the position of $c_{i6}$ to be at $-1-1\pm 4\pm 1$, that is at any value in the set $S_1=\{-7,-5,1,3\}$. $P_2$ allows the position of $c_{i6}$ to be at $0+2\pm 4\pm 2\pm 1$, that is at any value in the set $S_2=\{\pm 5,\pm 3,\pm 1,7,9\}$. $P_3$ allows the position of $c_{i6}$ to be at $-1-3\pm 2\pm 1$, that is at any value in the set $S_3=\{-7,-5,-3,-1\}$. $P_4$ allows the position of $c_{i6}$ to be at $-1\pm 4\pm 2$, that is at any value in the set $S_4=\{-7,-3,1,5\}$. Hence the position of $c_{i6}$ must be in the intersection $\bigcap_{1\le h\le 4} S_h$ which is, by inspection, empty. This negates the realizability of $\mathcal{G}$. 

  Next, we assume that the \textsc{3sat} instance is YES, which implies that no triplet of literals occurring in any clause is ever assigned the values (FALSE, FALSE, FALSE). This, in turn, implies that, for any clause $i\le m$, the vertex triplet $(\mathsf{L}_{i1}, \mathsf{L}_{i2}, \mathsf{L}_{i3})$ can never be realized at $(-1,-1,-1)$. By straightforward computation we show, in the table below, that all of the gadget vertices can be realized.
  \begin{center}
    \begin{tabular}{ll}
      $(\mathsf{L}_{i1}, \mathsf{L}_{i2}, \mathsf{L}_{i3})$ & $(c_{i1},c_{i2},c_{i3},c_{i4},c_{i5},c_{i6},c_{i7},c_{i8})$ \\ \hline
      $(1,1,1)$ & $(6,4,6,6,8,2,7,5)$ \\
      $(1,1,-1)$ & $(6,4,6,6,4,5,2,3)$  \\
      $(1,-1,1)$ & $(6,4,4,2,4,6,-2,5)$ \\
      $(-1,1,1)$ & $(-2,-4,-2,-2,0,-1,2,-3)$ \\
      $(1,-1,-1)$ & $(-2,4,2,2,0,1,-2,3)$ \\
      $(-1,1,-1)$ & $(-2,-4,-4,-2,-4,-3,2,-5)$ \\
      $(-1,-1,1)$ & $(-2,-4,-6,-6,-4,-5,-2,-3)$ 
    \end{tabular}
  \end{center}
  This concludes the proof.
\end{proof}
There is one last detail to settle in this reduction, which will be important for the A-EDGP. After the citation from \cite{saxe79} quoted above, Saxe adds that, if a realization of $\mathcal{G}$ is possible, then ``in fact, exactly one such embedding is possible''. This should be interpreted to mean that, if the \textsc{3sat} instance is YES, for every satisfiable assignment of boolean values to \textsc{3sat} variables there is a corresponding realization of the graph $\mathcal{G}$ (the precise positions of vertices in each clause gadget are given in the last part of the proof of Thm.~\ref{thm:3sat2edgp1}). 

At this point, Saxe notes that the edges having weights $3$ and $4$ in $\mathcal{G}$ can be replaced by further gadgets (called $T_3, T_4$ in \cite[Fig.~4.2]{saxe79}) only involving edges with weights $1$ and $2$, which yields a reduced EDGP${}_1$ instance graph $\mathcal{G}_1$ that only uses edges with weights in $\{1,2\}$. 
\begin{figure}[!ht]
  \begin{center}
\begin{tikzpicture}[
    vertex/.style={circle,draw,fill=blue!20,minimum size=15pt,inner sep=0pt},
    edge/.style={thick},
    scale=1.5]
  \node[vertex] (T31) at (0,0) {${e_1}$};
  \node[vertex] (T32) at (2,0) {${e_3}$};    
  \node[vertex] (T33) at (1,1) {${e_2}$}; 
  \node[vertex] (T34) at (3,1) {${e_4}$}; 
  \draw[edge] (T31) -- (T32) node[midway,above] {2};
  \draw[edge] (T31) -- (T33) node[midway,above] {1};
  \draw[edge] (T32) -- (T33) node[midway,above] {1};  
  \draw[edge] (T32) -- (T34) node[midway,above] {1};
  \draw[edge] (T33) -- (T34) node[midway,below] {2};  
\end{tikzpicture}
\begin{tikzpicture}[
    vertex/.style={circle,draw,fill=blue!20,minimum size=15pt,inner sep=0pt},
    edge/.style={thick},
    scale=1.5]
  \node[vertex] (T31) at (0,0) {${e_1}$};
  \node[vertex] (T32) at (2,0) {${e_3}$};    
  \node[vertex] (T33) at (1,1) {${e_2}$}; 
  \node[vertex] (T34) at (3,1) {${e_4}$};
  \node[vertex] (T35) at (4,0) {${e_5}$};  
  \draw[edge] (T31) -- (T32) node[midway,above] {2};
  \draw[edge] (T31) -- (T33) node[midway,above] {1};
  \draw[edge] (T32) -- (T33) node[midway,above] {1};  
  \draw[edge] (T32) -- (T34) node[midway,above] {1};
  \draw[edge] (T32) -- (T35) node[midway,above] {2};
  \draw[edge] (T33) -- (T34) node[midway,below] {2};
  \draw[edge] (T34) -- (T35) node[midway,below] {1};
  
\end{tikzpicture}
\end{center}
\caption{The gadget graphs $T_3$ (left) and $T_4$ (right) show that $\mathcal{G}$ is a YES instance of EDGP${}_1$ iff $\mathcal{G}$ modified so that every edge of weight $3$ (resp.~$4$) is replaced by the gadget $T_3$ (resp.~$T_4$).}
\label{fig:saxeT}
\end{figure}
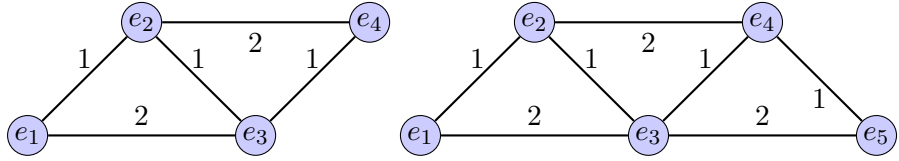
No formal proof is offered for this fact in \cite{saxe79}, but the gadgets are very simple (see Fig.~\ref{fig:saxeT}) and they clearly achieve what they are designed for: $T_3$ can only be realized in the real line as a segment of length $3$, and $T_4$ can only be realized in the real line as a segment of length $4$, exactly like edges with weights $3$ and $4$ would. Specifically, if the edge $\{u,v\}$ is to be replaced by the gadget $T_h$ for $h\in\{3,4\}$, $u\equiv e_1$ and $v\equiv e_{h+1}$ in the replacement graph. This way, $\mathcal{G}$ is a YES instance of EDGP${}_1$ iff $\mathcal{G}_1$ is a YES instance of EDGP${}_1$.

\section{EDGP${}_K$ is strongly \textbf{NP}-hard for any $K\ge 1$}
\label{s:edgpK}
In order to prove the \textbf{NP}-hardness of EDGP${}_K$ with $K>1$, Saxe proceeds by induction. He designs two further gadget graphs $R_1,R_2$ \cite[Fig.~4.3]{saxe79} based on the Pythagorical triplet $(3,4,5)$ as shown in Fig.~\ref{fig:saxeR}. It is claimed that the graph $\mathcal{G}_1$ (with weights in $\{1,2\}$ is a YES instance of EDGP${}_1$ iff the graph $\mathcal{G}'$ obtained by replacing all edges $\{u,v\}$ with weight $1$ with $R_1$ (setting $u\equiv f_1$, $v\equiv f_3$ in the replacement) and all edges $\{u,v\}$ with weight $2$ with $R_2$ (setting $u\equiv f_1$, $v\equiv f_5$ in the replacement) is a YES instance of EDGP${}_2$. Then he states that $\mathcal{G}'$ can be reduced to a graph $\mathcal{G}_2$ with weights in $\{1,2\}$ by using $T_3,T_4$ and similar gadgets for edges with weights in $\{5,8\}$: this starts the induction. Finally he states that, by induction, $\mathcal{G}_{K-1}$ is a YES instance of the EDGP${}_{K-1}$ iff $\mathcal{G}_{K}$ is a YES instance of EDGP${}_K$. 
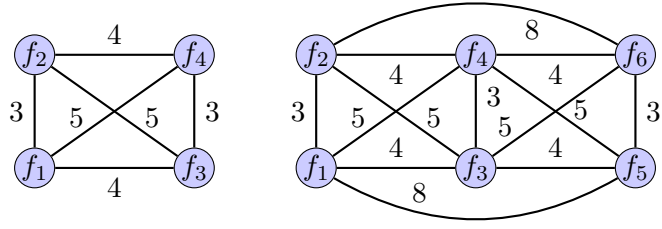
\begin{figure}[!ht]
  \begin{center}
    \begin{tabular}{lr}
      \makebox{
        \begin{tikzpicture}[baseline=(current bounding box.center),
    vertex/.style={circle,draw,fill=blue!20,minimum size=15pt,inner sep=0pt},
    edge/.style={thick},
    scale=1.5]
  \node[vertex] (R11) at (0,0) {${f_1}$};
  \node[vertex] (R12) at (1.4,0) {${f_3}$};    
  \node[vertex] (R13) at (1.4,1) {${f_4}$}; 
  \node[vertex] (R14) at (0,1) {${f_2}$}; 
  \draw[edge] (R11) -- (R12) node[midway,below] {4};
  \draw[edge] (R11) -- (R13) node[midway,above,pos=0.2] {5};
  \draw[edge] (R11) -- (R14) node[midway,left] {3};  
  \draw[edge] (R12) -- (R13) node[midway,right] {3};
  \draw[edge] (R12) -- (R14) node[midway,above,pos=0.2] {5};
  \draw[edge] (R13) -- (R14) node[midway,above] {4};
        \end{tikzpicture}
      }
    &
      \makebox{
        \begin{tikzpicture}[baseline=(current bounding box.center),
    vertex/.style={circle,draw,fill=blue!20,minimum size=15pt,inner sep=0pt},
    edge/.style={thick},
    scale=1.5]
  \node[vertex] (R21) at (0,0) {${f_1}$};
  \node[vertex] (R22) at (1.4,0) {${f_3}$};    
  \node[vertex] (R23) at (1.4,1) {${f_4}$}; 
  \node[vertex] (R24) at (0,1) {${f_2}$}; 
  \node[vertex] (R25) at (2.8,1) {${f_6}$};
  \node[vertex] (R26) at (2.8,0) {${f_5}$};    
  \draw[edge] (R21) -- (R22) node[midway,above] {4};
  \draw[edge] (R21) -- (R23) node[midway,above,pos=0.2] {5};
  \draw[edge] (R21) -- (R24) node[midway,left] {3};  
  \draw[edge] (R22) -- (R23) node[midway,right,pos=0.7] {3};
  \draw[edge] (R22) -- (R24) node[midway,above,pos=0.2] {5};
  \draw[edge] (R23) -- (R24) node[midway,below] {4};
  \draw[edge] (R22) -- (R26) node[midway,above] {4};
  \draw[edge] (R23) -- (R25) node[midway,below] {4};
  \draw[edge] (R22) -- (R25) node[midway,above,pos=0.1] {5};
  \draw[edge] (R23) -- (R26) node[midway,above,pos=0.7] {5};
  \draw[edge] (R25) -- (R26) node[midway,right] {3};
  \draw[edge] (R21) to[out=330,in=210] node[midway,above,pos=0.3] {8} (R26);
  \draw[edge] (R24) to[out=30,in=150] node[midway,below,pos=0.7] {8} (R25);
        \end{tikzpicture}
        }
    \end{tabular}
\end{center}
\caption{The gadget graphs $R_1$ (left) and $R_2$ (right) show that $\mathcal{G}_{1}$ is a YES instance of the EDGP${}_{1}$ iff $\mathcal{G}_2$ is a YES instance of the EDGP${}_2$.}
\label{fig:saxeR}
\end{figure}

Saxe does not warn the reader that the graph $\mathcal{G}'$, after the replacement with $R_1,R_2$, has edge weights scaled by $4$: edges of $\mathcal{G}_1$ with weight $1$ (resp.~$2$) become edges with weight $4$ (resp.~$8$) in $\mathcal{G}'$. Neither does he explicitly provide the gadgets $T_5,T_8$ to reduce edges with weights $5,8$ to edges with weights $1,2$: it suffices to grow $T_3,T_4$ to $T_5,T_8$ by simply attaching new vertices linked by two previous ones: we need one new vertex to construct $T_5$ from $T_4$, and three more to construct $T_8$ from $T_5$ (see Fig.~\ref{fig:saxeT58}). The graph after these $T_h$ replacements identifies the vertices $u,v$ adjecent to an edge $\{u,v\}$ with weight $h$ with $e_1,e_{h+1}$ in $T_h$ (for $h\in\{3,4,5,8\}$).
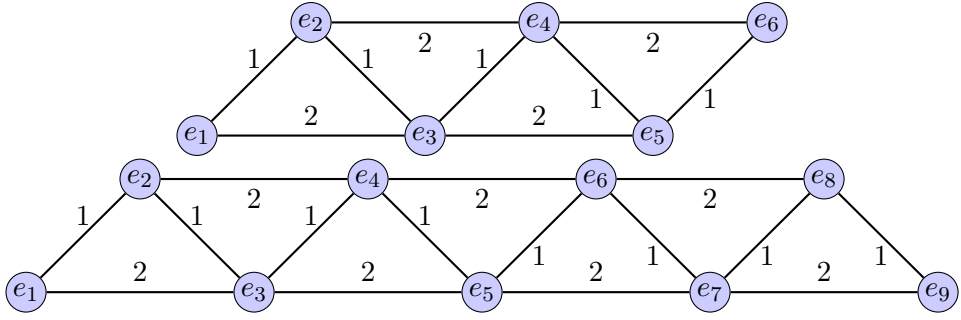
\begin{figure}[!ht]
  \begin{center}
\begin{tikzpicture}[
    vertex/.style={circle,draw,fill=blue!20,minimum size=15pt,inner sep=0pt},
    edge/.style={thick},
    scale=1.5]
  \node[vertex] (T31) at (0,0) {${e_1}$};
  \node[vertex] (T32) at (2,0) {${e_3}$};    
  \node[vertex] (T33) at (1,1) {${e_2}$}; 
  \node[vertex] (T34) at (3,1) {${e_4}$};
  \node[vertex] (T35) at (4,0) {${e_5}$};  
  \node[vertex] (T36) at (5,1) {${e_6}$};  
  \draw[edge] (T31) -- (T32) node[midway,above] {2};
  \draw[edge] (T31) -- (T33) node[midway,above] {1};
  \draw[edge] (T32) -- (T33) node[midway,above] {1};  
  \draw[edge] (T32) -- (T34) node[midway,above] {1};
  \draw[edge] (T32) -- (T35) node[midway,above] {2};
  \draw[edge] (T33) -- (T34) node[midway,below] {2};
  \draw[edge] (T34) -- (T35) node[midway,below] {1};
  \draw[edge] (T34) -- (T36) node[midway,below] {2};
  \draw[edge] (T35) -- (T36) node[midway,below] {1};
\end{tikzpicture} \\
\begin{tikzpicture}[
    vertex/.style={circle,draw,fill=blue!20,minimum size=15pt,inner sep=0pt},
    edge/.style={thick},
    scale=1.5]
  \node[vertex] (T31) at (0,0) {${e_1}$};
  \node[vertex] (T32) at (2,0) {${e_3}$};    
  \node[vertex] (T33) at (1,1) {${e_2}$}; 
  \node[vertex] (T34) at (3,1) {${e_4}$};
  \node[vertex] (T35) at (4,0) {${e_5}$};  
  \node[vertex] (T36) at (5,1) {${e_6}$};
  \node[vertex] (T37) at (6,0) {${e_7}$};
  \node[vertex] (T38) at (7,1) {${e_8}$};
  \node[vertex] (T39) at (8,0) {${e_9}$};
  \draw[edge] (T31) -- (T32) node[midway,above] {2};
  \draw[edge] (T31) -- (T33) node[midway,above] {1};
  \draw[edge] (T32) -- (T33) node[midway,above] {1};  
  \draw[edge] (T32) -- (T34) node[midway,above] {1};
  \draw[edge] (T32) -- (T35) node[midway,above] {2};
  \draw[edge] (T33) -- (T34) node[midway,below] {2};
  \draw[edge] (T34) -- (T35) node[midway,above] {1};
  \draw[edge] (T34) -- (T36) node[midway,below] {2};
  \draw[edge] (T35) -- (T36) node[midway,below] {1};
  \draw[edge] (T35) -- (T37) node[midway,above] {2};
  \draw[edge] (T36) -- (T37) node[midway,below] {1};
  \draw[edge] (T36) -- (T38) node[midway,below] {2};
  \draw[edge] (T37) -- (T38) node[midway,below] {1};
  \draw[edge] (T37) -- (T39) node[midway,above] {2};
  \draw[edge] (T38) -- (T39) node[midway,below] {1};  
\end{tikzpicture} 
\end{center}
\caption{The gadget graphs $T_5$ (above) and $T_8$ (below) used in proving the \textbf{NP}-hardness of EDGP${}_K$.}
\label{fig:saxeT58}
\end{figure}

\subsection{Saxe's realized gadgets $R_1,R_2$ only span $\mathbb{R}^2$}
We think there is a more serious issue with the $R_1,R_2$ gadgets, however, in that they are all realizable in $\mathbb{R}^2$. The induction starts off with a valid reduction from EDGP${}_1$ to EDGP${}_2$, but then, at the next step, $\mathcal{G}_3$ must also be realizable in $\mathbb{R}^2$ as long as $\mathcal{G}_2$ is realizable in $\mathbb{R}^2$ (because all of the added gadgets $R_1,R_2$ are realizable in $\mathbb{R}^2$), and so, inductively, $\mathcal{G}_K$ is realizable in $\mathbb{R}^2$ too. This does not make Saxe's result false, strictly speaking, since if a graph is realizable in $\mathbb{R}^2$ then it is also realizable  in $\mathbb{R}^K$ for all $K>1$: either trivially, by considering $\mathbb{R}^2$ a subspace of $\mathbb{R}^K$, or non-trivially, by \textit{choosing} to realize $R_1,R_2$ at the $K$-th inductive step in a $2D$ subspace orthogonal to $\mathbb{R}^{K-1}$. But we think Saxe's result is disappointing for $K>2$ in the sense that each $\mathcal{G}_K$ for $K>2$ is a YES instance of EDGP${}_\kappa$ for $\kappa\in\{2,\ldots,K\}$. We feel that the result would be much more convincing if each $\mathcal{G}_K$ \textit{needed} all of the $K$ dimensions in $\mathbb{R}^K$ in order to be realized. In other words, if $\mathcal{G}_K$ were a YES instance of EDGP${}_K$ but a NO instance of EDGP${}_{\kappa}$ for all $\kappa<K$.

\subsection{Simpler gadgets that span $\mathbb{R}^K$ and their critique}
Given any $K>1$, we therefore propose a direct (i.e.~non-inductive) construction of $\mathcal{G}_K$ based on replacing a single edge $\{u,v\}$ of $\mathcal{G}_1$, say with weight $1$, with a clique gadget $C^K_1$ over $K+1$ vertices with uniform edge weights having value $1$ (see Fig.~\ref{fig:klC} for a definition of $C^2_w$ and $C^3_w$ for any $w>0$), in such a way that $u\equiv k_1$ and $v\equiv k_2$. Since, for any $w>0$, the gadget $C^K_w$ can only be realized in $\mathbb{R}^K$ but not in $\mathbb{R}^{\kappa}$ for any $\kappa<K$, it follows that $\mathcal{G}_1$ is a YES instance of EDGP${}_1$ iff $\mathcal{G}_K$ is a YES instance of EDGP${}_K$. 
\begin{figure}[!ht]
  \begin{center}
    \begin{tikzpicture}[
        vertex/.style={circle,draw,fill=blue!20,minimum size=15pt,inner sep=0pt},
        edge/.style={thick},
        scale=1.5]
      \node[vertex] (Cw1) at (0,0) {${k_1}$};
      \node[vertex] (Cw2) at (2,0) {${k_2}$};    
      \node[vertex] (Cw3) at (1,1.732) {${k_3}$}; 
      \draw[edge] (Cw1) -- (Cw2) node[midway,above] {$w$};
      \draw[edge] (Cw2) -- (Cw3) node[midway,left] {$w$};
      \draw[edge] (Cw3) -- (Cw1) node[midway,right] {$w$};  
    \end{tikzpicture}\qquad 
    \begin{tikzpicture}[
        vertex/.style={circle,draw,fill=blue!20,minimum size=15pt,inner sep=0pt},
        edge/.style={thick},
        scale=1.5]
      \node[vertex] (Cw1) at (0,0) {$k_1$};
      \node[vertex] (Cw2) at (1.732,0) {$k_2$};    
      \node[vertex] (Cw3) at (1.732,1.732) {$k_3$}; 
      \node[vertex] (Cw4) at (0,1.732) {$k_4$}; 
      \draw[edge] (Cw1) -- (Cw2) node[midway,above] {$w$};
      \draw[edge] (Cw1) -- (Cw3) node[midway,above,pos=0.2] {$w$};
      \draw[edge] (Cw1) -- (Cw4) node[midway,left] {$w$};  
      \draw[edge] (Cw2) -- (Cw3) node[midway,right] {$w$};
      \draw[edge] (Cw2) -- (Cw4) node[midway,above,pos=0.2] {$w$};
      \draw[edge] (Cw3) -- (Cw4) node[midway,above] {$w$};
    \end{tikzpicture}
  \end{center}
  \caption{Clique gadgets $C^K_w$ for $K\in\{2,3\}$.}
\label{fig:klC}
\end{figure}
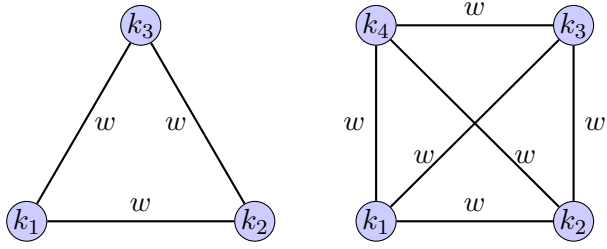
We remark however that this definition of $\mathcal{G}_K$ yields a flexible (rather than rigid) framework for any $K>2$, since different gadgets $C^K_w$ can rotate around the line spanned by the original 1D realization of $\mathcal{G}_1$ with different angles (see Fig.~\ref{fig:flex}). Since realization uniqueness (related to rigidity) will be used in Sect.~\ref{s:ambiguous}, this simple construction is only valid for the task at hand. 
\begin{figure}[!ht]
  \begin{center}
    \includegraphics[width=5cm]{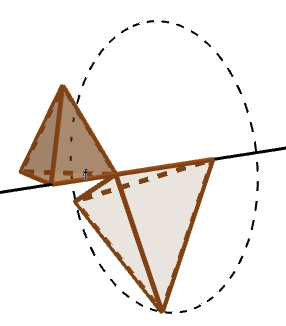}
  \end{center}
  \caption{Two clique gadgets $C^3_w$, for different values of $w$: each clique gadget can independently rotate around the reference axis containing the 1D realization of $\mathcal{G}_1$, yielding a flexible framework.}
\label{fig:flex}
\end{figure}

\subsection{Gadgets that fill $\mathbb{R}^K$ and yield a rigid graph}
In order to obtain a rigid framework, we propose an extension of the graph gadgets $R_1,R_2$ to any $K$ rather than just $K=2$. We consider $R_1$ as two $2$-cliques with uniform weight $3$ (the first consisting of vertices $f_1,f_2$, the second of $f_3,f_4$) joined by edges $\{f_1,f_3\}$ and $\{f_2,f_4\}$ with weight $4$, and all of the other edges having weight $5$. The generalization of this construction in $K$ dimension consists of two $K$-cliques with all edges having weight $3$, joined by all edges $\{i,K+i\}$ (for $i\le K$) with weight $4$, and all other edges with weight $5$. We call this gadget graph $\bar{R}^K_1$. We define $\bar{R}^K_2$ from two copies of $\bar{R}^K_1$ analogously to $R_2$ with respect to $R_1$: two copies of $\bar{R}^K_1$ are joined by merging the second $K$-clique of the first copy to the first $K$-clique of the second copy, which gives a graph on $3K$ vertices labeled $\{1,\ldots,K\}$ (first $K$-clique), $\{K+1,\ldots,2K\}$ (second $K$-clique), $\{2K+1,\ldots 3K\}$ (third $K$-clique); finally, we add edges $\{i,2K+i\}$ for all $i\le K$ with weight $8$.

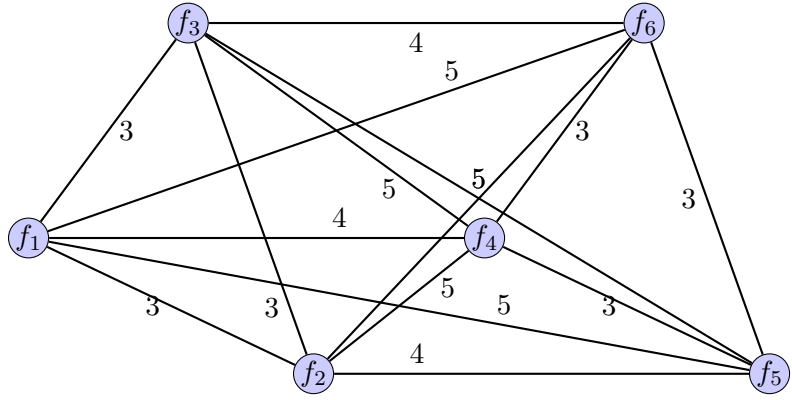
\begin{figure}[!ht]
  \begin{center}
    \begin{tikzpicture}[
        vertex/.style={circle,draw,fill=blue!20,minimum size=15pt,inner sep=0pt},
        edge/.style={thick},
        scale=1.5]
      \node[vertex] (RK1) at (0,0) {${f_1}$};
      \node[vertex] (RK2) at (4,0) {${f_4}$};    
      \node[vertex] (RK3) at (5.4,1.9) {${f_6}$}; 
      \node[vertex] (RK4) at (1.4,1.9) {${f_3}$};
      \node[vertex] (RK5) at (2.5,-1.2) {${f_2}$};
      \node[vertex] (RK6) at (6.5,-1.2) {${f_5}$};
      \draw[edge] (RK1) -- (RK2) node[midway,above,pos=0.7] {4};
      \draw[edge] (RK2) -- (RK3) node[midway,right] {3};
      \draw[edge] (RK3) -- (RK4) node[midway,below] {4};
      \draw[edge] (RK4) -- (RK1) node[midway,right] {3};
      \draw[edge] (RK1) -- (RK3) node[midway,above,pos=0.7] {5};
      \draw[edge] (RK1) -- (RK5) node[midway,left] {3};
      \draw[edge] (RK2) -- (RK6) node[midway,left] {3};
      \draw[edge] (RK5) -- (RK6) node[midway,above,pos=0.2] {4};
      \draw[edge] (RK1) -- (RK6) node[midway,above,pos=0.65] {5};
      \draw[edge] (RK4) -- (RK5) node[midway,left,pos=0.85] {3};
      \draw[edge] (RK3) -- (RK6) node[midway,left] {3};
      \draw[edge] (RK5) -- (RK3) node[midway,above] {5};
      \draw[edge] (RK4) -- (RK2) node[midway,below,pos=0.7] {5};
      \draw[edge] (RK5) -- (RK2) node[midway,below,pos=0.85] {5};
      \draw[edge] (RK4) -- (RK6) node[midway,above] {5};
    \end{tikzpicture}    
  \end{center}
  \caption{Gadget $\bar{R}^K_1$ for $K=3$.}
\label{fig:RK1}
\end{figure}

The framework $(\bar{R}^K_1,x)$ where $x$ is any distance-respecting realization is globally rigid in $\mathbb{R}^K$ because it is a $2K$-clique that realizes in $\mathbb{R}^K$, and it cannot realize in any fewer dimensions because of the two $K$-cliques with all edges having weight $3$. We remark that: (i) $\bar{R}^K_1$ is integer-weighted, like $R_1$, (ii) in the realization of $\bar{R}^K_1$ the segments with length $4$ are all parallel, and so they can be arranged to span the whole real line, (iii) $\bar{R}^K_1$ can be realized in $\mathbb{R}^K$ but not in $\mathbb{R}^{\kappa}$ for any $\kappa<K$ (the two $K$-cliques can each be realized in $\mathbb{R}^{K-1}$, and the other edges can all be realized $K$ dimensions). We note that $\bar{R}^K_1$ shares properties (i)-(ii) with gadget $R_1$ (Fig.~\ref{fig:saxeR}, left), which allows us to use $\bar{R}^K_1$ as a replacement for edges having weight $1$ in $\mathcal{G}_1$, in the same way as $R_1$ is used. The gadget $\bar{R}^K_2$ is analogous to $R_2$.

The modified gadgets $\bar{R}^K_1,\bar{R}^K_2$ remain polynomially sized (each has $O(K)$ vertices and $O(K^2)$ edges), are applied to $\mathcal{G}_1$ in a similar way as Saxe's original gadgets $R_1,R_2$, and ensure that the the replacement of edges of weight $1$ (resp.~$2$) in $\mathcal{G}_1$ with $\bar{R}^K_1$ (resp.~$\bar{R}^K_2$) yields a graph $\bar{\mathcal{G}}'$ that can be realized in $\mathbb{R}^K$ but not in any $\mathbb{R}^\kappa$ for $\kappa<K$. Like $R_1,R_2$, each of the modified gadgets $\bar{R}^K_1,\bar{R}^K_2$ can be placed next to each other so that the rightmost clique of the left gadget coincides with the leftmost clique of the right gadget. This ensures that the resulting graph $\bar{\mathcal{G}}'$ can still be constructed similarly to $R_1,R_2$ while being rigid in $\mathbb{R}^K$ (differently from the clique gadgets $C^K_w$).

The graph $\bar{\mathcal{G}}_K$ is derived from $\bar{\mathcal{G}}'$ by replacing every edge with weight $w$ with the corresponding $T_w$ gadget, so that $\bar{\mathcal{G}}_K$ only has edges with weights in $\{1,2\}$. We conclude that $\mathcal{G}_1$ is a YES instance of EDGP${}_1$ iff $\bar{\mathcal{G}}_K$ is a YES instance of EDGP${}_K$.

\subsection{Critique of the new gadgets}
There is a notable difference between our construction and Saxe's, aside from the fact that Saxe's only really allows a true dimensional increase up to $K=2$: Saxe's construction can be certified by integer realizations, while ours cannot: realizing each of the two $K$-cliques requires at least one point to be irrational (albeit algebraic). Replacing these cliques by hypercubes would fix the issue but make the increase in graph vertex size exponential (the $K$-dimensional hypercube has $2^K$ vertices). This will have an impact in Sect.~\ref{s:realK}.

\section{EDGP with real-valued edge weights}
\label{s:realEDGP}
In \cite[\S 5]{saxe79}, Saxe discusses the case of the EDGP where the weight function $d:E\to\mathbb{R}$ is not restricted to integer values (as in all of the above cases), but can actually take real values. To discuss the \textbf{NP}-hardness of this case, he introduces the following definitions: (i) for an $\epsilon>0$, an $\epsilon$-approximate realization $x$ is such that
\begin{equation}
  \forall \{u,v\}\in E \quad (1-\alpha)d_{uv}\le \|x_u - x_v\|_2^2 \le (1+\alpha)d_{uv}\,; \label{epsdgp}
\end{equation}
(ii) for $0<\epsilon<\delta$ the $(\epsilon,\delta)$-approximate EDGP consists in asserting whether the real-weighted graph $G=(V,E,d)$ has a $\delta$-approximate realization (YES instance) or has no $\epsilon$-approximate realization in $\mathbb{R}^K$ (NO instance); (iii) a graph $G$ is $\epsilon$-approximately realizable if $G$ has a realization satisfying Eq.~\eqref{epsdgp}. According to Saxe, if the least value $\epsilon'$ for which $G$ is $\epsilon'$-approximately realizable is such that $\epsilon'\in (\epsilon,\delta)$, then $G$ can be arbitrarily declared either a YES or a NO instance. Next, he states a theorem that the $(1/18,1/9)$-approximate EDGP${}_1$ where $\ran{d}$ is integer is \textbf{NP}-complete. The ``sketch of proof'' is as follows:
\begin{quote}
  {\small
    ``We note that the embeddability properties of the graphs used in the proof of \cite[Thm.~4.1]{saxe79} depend only on cycles of length no greater than 16 having edges whose length are multiples of 1. It follows from this that for such graphs $\epsilon$-approximate $1$-embeddability is equivalent to ordinary $1$-embeddability for any $\epsilon<1/8$, and the desired result is at hand.'' ($\ast$)
  }
\end{quote}
In \cite[\S 5]{saxe80}, Saxe makes three changes to the above description: (1) the range of instances of $(\epsilon,\delta)$-approximate EDGP${}_1$ for which the YES/NO status is arbitrarily decided goes from the closed interval $[\epsilon,\delta]$ to the half-open interval {$(\epsilon,\delta]$}; (2) the theorem asserts the \textbf{NP}-completeness of the $(\epsilon,\delta)$-approximate EDGP${}_1$ with integer weights for all $0\le\epsilon\le\delta<1/8$ (as opposed to $\epsilon<\delta$); (3) the ``sketch of proof'', \textit{verbatim}, becomes a ``proof''.\footnote{Saxe makes two more additions in \cite[\S 5]{saxe80}: (i) a theorem stating that there is $\epsilon>0$ such that the $(0,\epsilon)$-approximate EDGP${}_K$ where $\ran{d}$ is integer is \textbf{NP}-hard (the proof simply says ``the argument is outlined in the above text. Details are left to the reader'', where the text points to gadgets $R_1,R_2$ in Fig.~\ref{fig:saxeR}), and (ii) a comment stating that an intervention of the audience during the Allerton conference (whose proceedings include \cite{saxe79}), added that the $(\epsilon,\delta)$-approximate EDGP${}_K$ is actually in \textbf{NP}. We will make sense of (i) but leave (ii) out, since it is not due to Saxe.}

  Our main criticism is that Saxe's purpose, that of studying graphs with real-valued weights, was turned into another goal altogether, namely that of studying graphs with integer weights realized by algorithms that provide imprecise realizations. Our second criticism concerns the fact that the proof of \cite[Thm.~5.1]{saxe79} is excessively hermetic. The third criticism concerns the extension of the \textbf{NP}-hardness of the $(\epsilon,\delta)$-approximate EDGP${}_1$ with integer weights to higher dimensions $K>1$.

  \subsection{Graphs with real-valued integer realizations}
  \label{s:realint}
  There is not much more to say about our main criticism: all \textbf{NP}-hardness proofs in \cite[\S 5]{saxe79} and \cite[\S 5]{saxe80} clearly specify that $\ran{d}$, i.e.~the edge weights, are integer-valued, and therefore the section title about real-valued edge weights is misleading. Indeed, edge weight integrality is one of the keys to understanding the hermetic proof of \cite[Thm.~5.1]{saxe79}, as we shall see. Therefore Saxe's focus in this section is in the $\epsilon$-approximate realization of an integer-weighted graph in 1 dimension. While this is certainly a topic worth studying, we fail to see why one would want to use imprecise floating-point arithmetic in order to find a solution that is known \textit{a priori} to be integer. 

  \subsection{\textbf{NP}-hardness of approximate realizability}
  \label{s:realapprox}
  We call the sum of edge weights over a cycle the \textit{cycle length}. The proof of \textbf{NP}-hardness of the $(\epsilon,\delta)$-approximate EDGP${}_1$, as stated by Saxe, is as follows: the graph $\mathcal{G}$ formed by the gadget graphs in Fig.~\ref{fig:saxe12}-\ref{fig:saxe3} has cycles having length at most $16$; in these cycles there are also edges with weight $1$; therefore any graph with these properties is $\epsilon$-approximately realizable iff it is realizable in $K=1$ dimensions. Here is a list of our doubts about this theorem of Saxe.
  \begin{enumerate}
  \item The definition of the $(\epsilon,\delta)$-approximate EDGP${}_1$ problem is unusual, in that some instances need to be assigned a YES/NO status arbitrarily. \label{doubt1} 
  \item In order to prove \textbf{NP}-hardness of the $(\epsilon,\delta)$-approximate EDGP${}_1$, Saxe appears to suggest that he is reducing EDGP${}_1$ to its approximate version. If this were so, one should take \textit{any} instance $G=(V,E,d)$ of the EDGP${}_1$ and construct an instance of the approximate version that is YES if and only if $G$ is YES in in EDGP${}_1$. But Saxe constrains instances to have cycles with cumulative weight $16$ and edges with weight $1$. \label{doubt2}
  \item Saxe further implies that it is sufficient to reduce EDGP${}_1$ graphs consisting of a single cycle of length $16$, but the preceding doubt suggests that the graph that needs to be reduced to the target problem might consist of many cycles: why does the proof not need to show that \textit{all} of the cycles in ths graph can be realized consistently? \label{doubt3}
  \item Why does the cycle length $16$ induce a $\delta=1/8$? More generally, what is the relationship between the cycle length and $\delta$? \label{doubt4}
  \item Since the $(\epsilon,\delta)$-approximate EDGP${}_1$ has a condition for NO that is not the negation of the condition for YES, the NO instances of SAT that map to an unrealizable $\mathcal{G}$ should be also mentioned in the proof, but this is not so. \label{doubt5}
  \end{enumerate}

  \subsubsection{An \textbf{NP}-complete subclass of EDGP${}_1$.}
  We first deal with our doubt \ref{doubt2}: Saxe's assumptions can hold only if we actually reduce \textsc{3sat} to the $(\epsilon,\delta)$-approximate EDGP${}_1$, via the EDGP${}_1$ instance graph $\mathcal{G}$ constructed using the gadgets in Fig.~\ref{fig:saxe12}-\ref{fig:saxe3}. So this is not a reduction from EDGP${}_1$ to its approximate counterpart, but a reduction from the \textbf{NP}-complete subclass of EDGP${}_1$ instance graphs $\mathcal{G}$. This is of course sufficient to prove the \textbf{NP}-hardness of the target problem, but a reduction from \textit{any} instance of EDGP${}_1$ would have been more convincing.

  \subsubsection{Cycles with length $16$.}
  Next, we discuss doubt \ref{doubt4} (we hope that this discussion will also clarify doubts \ref{doubt1} and \ref{doubt5}). Recall the first hardness result in \cite{saxe79}, i.e.~the weak \textbf{NP}-hardness of the EDGP${}_1$ by reduction from \textsc{partition}. YES instances $(a_1,\ldots,a_n)$ of \textsc{partition} reduce to single cycle graphs having edges $\{v,v+1\}$ with length $d_{v,v+1}=a_v$ that can be realized so that the cycle can be closed. The cycle length is $L=\sum_{v<n} a_v$. Vertices $v$ can be placed either left or right of the previous vertex $v-1$. For YES instances of \textsc{partition}, the left-hand side sum is equal to the right-hand-side sum:
  \begin{equation}
    \sum_{v\in I} a_v=\sum_{v\not\in I} a_v. \label{eq:cyclesum}
  \end{equation}
  The right-hand side is the sum of the lengths of edges realized with an orientation from $v$ to $v+1$ (placement of $v+1$ on the right of $v$). The left-hand side is the same with an orintation from $v+1$ to $v$ (placement of $v+1$ on the left of $v$). Therefore each side of the equation must sum to half of the cycle length, i.e.~$L/2$. Since both sums are integer, it follows that $L$ must be even. Back to the graph $\mathcal{G}$, if the longest cycles have length $L=16$, each sum must have value $L/2=8$.

  Assume that the EDGP${}_1$ instance graph $\mathcal{G}$ is YES, $\delta=1/8$, and there exists an approximate realization $y$ such that $y_{v+1}-y_{v}=(1+1/8)d_{v,v+1}$ for all $v\in I$. Then we have
  \begin{equation}
    \sum_{v\in I} (1+1/8)d_{v,v+1}=(1+1/8)\sum_{v\in I}d_{v,v+1}=(1+1/8)\times 8=9,  \label{eq:halflen}
  \end{equation}
  which is an integer value different from $L/2=8$. As a possible consequence of this integer error, consider the NO \textsc{partition} instance $(8,9)$: the corresponding $(\epsilon,1/8)$-approximate EDGP${}_1$ instance is YES, as witnessed by the realization that places $v_1=0$, $v_2=(1+1/8)\times 8=9$, and $v_3=v_1=0$, which explains why reductions from $\mathcal{G}$ to $(\epsilon,\delta)$-approximate EDGP${}_1$ with $\delta\ge 1/8$ may be invalid. 
  By contrast, if $\delta<1/8$, the corresponding value of Eq.~\eqref{eq:halflen} would be fractional, and belong to the open interval $(7,9)$. In such a situation it is possible to derive an exact realization from an approximate one, which provides the key observation for a valid reduction from EDGP${}_1$ to its approximate version.

  We remark that the fractional edge error $1/8$ was derived from $2/16$, where $16$ is the cycle length: this generalizes to $\delta\ge 2/L$ for cycles of length $L$ with $n$ edges with an approximate realization $y$. For the $v$-th edge $\{v,v+1\}$ of such a cycle let $\alpha_v=1-\frac{\|y_{v+1}-y_v\|_2}{d_{v,v+1}}$ and let $\Phi=\sum_{v\in I}\alpha_v$. If $0\le\Phi<1$ then a judicious rounding of the errors can lead to a correct realization (see below). On the other hand, there may be cycles reduced from NO instances of \textsc{partition} that can be approximately realized while yielding $\Phi\ge 1$, which invalidates the reduction. In general, for cycles of even length $L$, the extent of realization imprecision may only vary in the open interval $(1-2/L,1+2/L)$, beyond which the reduction may be invalid.

  We note that, in the above reasoning, we only mentioned \textsc{partition} to explain why the cycle length $L=16$ induces a $\delta=2/L=1/8$. The reduction of interest, however, is not from \textsc{partition} but from \textsc{3sat}, via the EDGP${}_1$ instance graph $\mathcal{G}$ (see Sect.~\ref{proof51}). The point is that, if $\mathcal{G}$ is a YES instance of the EDGP${}_1$, each cycle in its graph will have a realization $x$ satisfying Eq.~\eqref{eq:halflen}: in particular, this will hold also for the longest cycles that have length $L=16$. These yield the smallest upper bound on $\delta$, i.e.~$\delta<2/L=1/8$. Since $x$ is a precise realization, it is also valid for the $(\epsilon,\delta)$-approximate EDGP${}_1$ for any $0<\epsilon<\delta$: this establishes that YES instances of the source problem correspond to YES instances of the target problem.

  If $\mathcal{G}$ is a NO instance, however, there will be no precise realization of at least some of its cycles. We suppose, to aim at a contradiction, that there are $0<\epsilon<\delta$ such that the corresponding $(\epsilon,\delta)$-instance is YES. Then there must be a $\delta$-approximate realization $y$ that realizes each cycle of $\mathcal{G}$ so that edges with weights $d_{v,v+1}$ are realized as segments of length $|y_{v+1}-y_v|$ in the interval $[(1-\delta)d_{v,v+1},(1+\delta)d_{v,v+1}]$. Now we define a realization $x$ from $y$ so that it corrects the fractional error introduced by $y$, and therefore realizes the cycle exactly, as follows:
  \begin{enumerate}
    \setlength{\parskip}{-0.3em}
  \item if $y_{v+1}>y_v$ and $y_{v+1}-y_v>d_{v,v+1}$ let $x_{v+1}=\lfloor y_{v+1}\rfloor$\label{ae1}
  \item if $y_{v+1}>y_v$ and $y_{v+1}-y_v<d_{v,v+1}$ let $x_{v+1}=\lceil y_{v+1}\rceil$\label{ae2}
  \item if $y_{v+1}>y_v$ and $y_{v+1}-y_v=d_{v,v+1}$ let $x_{v+1}=y_{v+1}$\label{ae3}
  \item if $y_{v+1}<y_v$ and $y_v-y_{v+1}>d_{v,v+1}$ let $x_{v+1}=\lceil y_{v+1}\rceil$\label{ae4}
  \item if $y_{v+1}<y_v$ and $y_v-y_{v+1}<d_{v,v+1}$ let $x_{v+1}=\lfloor y_{v+1}\rfloor$\label{ae5}
  \item if $y_{v+1}<y_v$ and $y_v-y_{v+1}=d_{v,v+1}$ let $x_{v+1}=y_{v+1}$.\label{ae6}
  \end{enumerate}
  The above algorithm is feasible because the integer distance values $d_{uv}$ of the graph $\mathcal{G}$ are part of the approximate EDGP${}_1$ input. We remark that such an $x$ realizes any cycle of length $L$ exactly as long as the following condition holds:
\begin{center}
  $x_{v+1}\not=y_{v+1}$ in cases \ref{ae1}-\ref{ae2} and \ref{ae4}-\ref{ae5}.
\end{center}
If this condition holds, then each $x_v$ loses the fractional part of $y_v$ and is reset to the integer value such that $|x_{v+1}-x_v|=d_{v,v+1}$, which yields a valid realization $x$ of the NO instance of the EDGP${}_1$: a contradiction. Therefore there are cycles in $\mathcal{G}$ such that no $\delta$-approximate realization of the cycle exists. This, in turn, implies that no realizations for $\mathcal{G}$ can exist, which establishes that NO instances of the source problem correspond to NO instances of the target problem. This arguement proves the \textbf{NP}-hardness of $(\epsilon,\delta)$-approximate EDGP${}_1$.

Let us also pursue the case in which the above condition does not hold: then $y_{v+1}$ is an integer in some cases \ref{ae1}-\ref{ae2} or \ref{ae4}-\ref{ae5}. This implies that for such vertices we have $x_{v+1}=y_{v+1}$, and we can no longer deduce that $|x_{v+1}-x_v|=d_{v,v+1}$. In fact, since both $x_{v+1},x_v$ are integers, we have  $|\,|x_{v+1}-x_v|-d_{v,v+1}|>1$, resulting in $\delta>1$. Moreover, there can be no realization with error $\le\epsilon<2/L$ since otherwise the above rounding algorithm would obtain a valid realization. This allows us to deduce directly (instead of by contradiction) that the $(\epsilon,\delta)$-approximate EDGP${}_1$ instance is a NO instance. We think that this observation about $\epsilon$ should provide an answer to doubts \ref{doubt1} and \ref{doubt5}: the definition of $(\epsilon,\delta)$-approximate EDGP${}_1$ certainly looks ambiguous, but it is precise enough to support the reduction and its analysis.

  \subsubsection{From the cycles to the graph} \label{proof51}
  Lastly, we tackle doubt \ref{doubt3}: consider a smallest cycle cover $\mathcal{C}$ of $\mathcal{G}$ constrained to contain all cycles of length $16$. If the EDGP${}_1$ instance is YES then $\mathcal{G}$ has a realization $x$. Each cycle in $\mathcal{C}$ is also realized by the relevant components of $x$, and so, by the above discussion, $x$ also provides a YES certificate for $\mathcal{G}$ as an instance of $(\epsilon,\delta)$-approximate EDGP${}_1$ for any $0<\epsilon<\delta$. If $\mathcal{G}$ is a NO instance of EDGP${}_1$, then, by the gadget in Fig.~\ref{fig:saxe3} and the above discussion, there must be at least one cycle of length $16$ that is not realized correctly, making $\mathcal{G}$ a NO instance of $(\epsilon,\delta)$-approximate EDGP${}_1$. We hope that this provides a more comprehensible proof of \cite[Thm.~5.1]{saxe79}. 
  
\subsection{Extension to higher dimensions}  
\label{s:realK}
The extension of Saxe's proof of the approximate EDGP${}_1$ to the approximate EDGP${}_K$ for $K>1$ is similar to the extension of the proof of \textbf{NP}-hardness of EDGP${}_1$ extended to EDGP${}_K$ with $K>1$: it is based on gadgets $R_1,R_2$ and $T_k$ for $k\in \{3,4,5,8\}$. It has the same shortcoming as the extension to EDGP${}_K$ in the sense that for $K>2$ the realizations are still in an embedding of $\mathbb{R}^2$ rather than needing all $K$ dimensions. Our modification of the gadgets $R_,R_2$ into $\bar{R}^K_1,\bar{R}^K_2$ (see Fig.~\ref{fig:RK1}) cannot be adapted to the proof of Sect.~\ref{s:realapprox}, however, because at least two vertices in each $\bar{R}^K_1$ and at least three vertices in each $\bar{R}^K_2$ need to be placed at irrational positions whenever $K>2$ (the irrational vertices belong to the initial $K$-cliques, as mentioned at the end of Sect.~\ref{s:edgpK}).

\section{The ambiguous EDGP}
\label{s:ambiguous}
This section refers to \cite[\S 6]{saxe79} and \cite[\S 6]{saxe80}. We think that the definitions are understandable in both papers. While \cite{saxe79} only contains a proof sketch, \cite{saxe80} contains a complete proof. Although many details are skipped, we think that the missing details are not excessively hard to infer. We therefore simply give an overview of Saxe's material. 

Let $P$ be a problem in \textbf{NP}. By the definition of \textbf{NP} there exists a certificate $c$ such that the verifier $V_P(p,c)$ yields $\mathsf{TRUE}$ if and only if $p$ is a YES instance. The \textit{ambiguous} variant of $P$ ascertains whether a YES instance of $p$ has multiple certificates: given a YES instance $p$ of $P$ and a certificate $c$ of $p$ such that $V_P(p,c)=\mathsf{TRUE}$, is there another certificate $c'\not=c$? 

In the context of the EDGP, we assume that certificates are realizations of the instance graphs. We consider two realizations $x,y$ ``equal'' if they are congruent, i.e.~there is a congruence mapping one to the other. We introduce the notation $x\approx y$ to mean that $x,y$ are congruent. For $K=1$, congruences only involve rational arithmetic, and verifying whether $x\approx y$ can be done in polynomial time. For $K>1$ this is not the case, because rotation matrices may involve irrational (and non-algebraic) numbers. In such cases the verification of $x\approx y$ can be done in polynomial time on a real RAM.

We can finally define the A-EDGP: given $K>0$, a weighted graph $G=(V,E,d)$, and an $n\times K$ matrix $x$ that is a realization of $G$, is there another realization $y$ of $G$ such that $x\not\approx y$? Saxe's reduction follows multiple steps: he first reduces \textsc{3sat} to the ambiguous \textsc{4sat} (\textsc{a-4sat}), which is then reduced to the ambiguous \textsc{3sat} (\textsc{a-3sat}), which is further reduced to the A-EDGP${}_1$ by Sect.~\ref{s:npcomplete}, and then, by Sect.~\ref{s:edgpK}, to the A-EDGP${}_K$ for $K>1$.

\subsection{From $3$-satisfiability to ambiguous $4$-satisfiability}
The first reduction is as follows: let $\phi$ be a \textsc{3sat} instance in CNF, consisting of clauses $c_1,\ldots,c_m$ each defined on literals based on logical variables $s_1,\ldots,s_n$. We construct an \textsc{a-4sat} instance as follows:
\begin{eqnarray}
  \psi &=& \big(t \wedge \bigwedge_{j\le n} s_j\big) \vee (\bar{t} \wedge \phi) = \big(\bar{t} \vee \bigwedge_{j\le n} s_j\big) \wedge (t\vee \phi) \label{amb1} \\
  &=& \bigwedge_{j\le n} (\bar{t}\vee s_j) \wedge \bigwedge_{i\le m}(t\vee c_i).\label{amb2}
\end{eqnarray}
We claim that $\psi$ in Eq.~\ref{amb1} is always a YES instance of \textsc{sat}. If $\phi$ is satisfiable with certificate $s^\ast$, then we let $\bar{t}=\mathsf{TRUE}$ (i.e.~$t=\mathsf{FALSE}$), which makes the second term in the disjunction satisfiable by the certificate $(\mathsf{FALSE},s^\ast)$: therefore $\psi$ is satisfiable. If $\phi$ is unsatisfiable we let $t=\mathsf{TRUE}$ and assign $s^0_j=\mathsf{TRUE}$ for each $j\le n$. Then $(\mathsf{TRUE},s^0)$ satisfies $\psi$. This settles the claim. Moreover, by Eq.~\eqref{amb2} $\psi$ is an instance of \textsc{4sat}. Now the \textsc{a-4sat} verifier takes $\psi$ and $\phi$ as input, and checks whether the certificate $s^\ast$ for $\phi$ also satisfies $\psi$: if $\phi$ is YES then the certificate $(\mathsf{FALSE},s^\ast)$ also satisfies $\psi$, which means that $\psi$ has another certificate besides $(\mathsf{TRUE},s^0)$, and is therefore a YES instance of \textsc{a-4sat}. If $\phi$ is NO, then any assignment $s'$ of boolean values to the logical variables will not satisfy $\phi$, which means that $\psi$ has the unique solution $(\mathsf{TRUE},s^0)$, implying that $\psi$ is a NO instance of \textsc{a-4sat}. Therefore \textsc{a-4sat} is \textbf{NP}-complete.

\subsection{From ambiguous $4$-satisfiability to ambiguous $3$-satisfiability}
For each $i\le m$, Each clause $c_i$ of an \textsc{4sat} instance $\psi$ consists of four literals by definition: say $c_i=\ell_{i1}\vee\ell_{i2}\vee\ell_{i3}\vee\ell_{i4}$. We introduce a new variable $q_i$ and define the following conjunction of clauses, each consisting of $3$ literals:
\begin{equation}
  c'_i=(\ell_{i1}\vee\ell_{i2}\vee q_i)\land(\ell_{i3}\vee\bar{\ell}_{i4}\vee q_i)\land (\bar{\ell}_{i3}\vee\ell_{i4}\vee q_i)\land (\bar{\ell}_{i3}\vee\bar{\ell}_{i4}\vee q_i)\land (\ell_{i3}\vee\ell_{i4}\vee \bar{q}_i).
\end{equation}
It turns out that, for each assignment of boolean values to $\ell_{ij}$ (for $j\le 4$) for which $c_i$ is satisfied, there is a unique assignment of a boolean value to $q_i$ such that $c'_i$ is satisfied. By induction on $i$ it follows that $\psi'=\bigwedge_{i\le m} c'_i$ is a \textsc{3sat} instance such that, if $s^\ast$ is a certificate for $\psi$, then there is a unique assignment $q^\ast$ of boolean values to the $q_i$ variables that make $(s^\ast,q^\ast)$ a certificate for $\psi'$. Conversely, if $\psi$ is unsatisfiable, $\psi'$ cannot be satisfiable, otherwise every conjunction $c'_i$ could be transformed back to $c_i$ providing an easy contradiction.

This establishes a one-to-one correspondence between the certificates of YES instances of \textsc{4sat} and those of the \textsc{3sat} instances to which the latter are reduced. Now consider an instance of \textsc{a-4sat}: because of this one-to-one correspondence, it has many different certificates if and only if the corresponding \textsc{a-3sat} does, which establishes the \textbf{NP}-completeness of \textsc{a-3sat}.

In \cite{saxe80}, Saxe provides a much more general treatment than ours, and proves that the conclusion of the above reasoning actually holds for all ambiguous problem pairs where there is a one-to-one certificate correspondence. Moreover, he also generalizes the equality relation to any equivalence relation. In this paper we only deal with the EDGP and only consider the congruence equivalence relation.

\subsection{From ambiguous $3$-satisfiability to A-EDGP}
By Thm.~\ref{thm:3sat2edgp1} and the note after its proof, there is a one-to-one correspondence between \textsc{3sat} certificates and realizations of $\mathcal{G}$ up to congruence. Therefore, since the \textsc{a-3sat} is \textbf{NP}-complete and the EDGP${}_1$ is in \textbf{NP} by Prop.~\ref{prop:ndtm}, the A-EDGP${}_1$ is \textbf{NP}-complete, even if its edge weights are in $\{1,2\}$. Moreover, by Sect.~\ref{s:edgpK}, the A-EDGP${}_K$ is \textbf{NP}-hard for any $K>1$. Finally, the A-EDGP is strongly \textbf{NP}-hard.

\section{Conclusion}
\label{s:concl}
We provided a criticism of two papers by Saxe devoted to the computational complexity of the EDGP and some its variants, filling in many details that the proof sketches in the papers omitted. We concur with Saxe that the discovery of some new strongly \textbf{NP}-hard geometric problems is interesting, and plan to carry out more research in this direction: is there anything to be gained by seeing \textsc{sat} or other ``general'' problems used by humans for modelling purposes as problems to be modelled and solved geometrically? Some preliminary considerations in this sense are given in \cite{dgcomp1}.


\bibliographystyle{plain}
\bibliography{../../../dr1}

\end{document}

%% file: definecolors.tex
\definecolor{offwhite}{gray}{0.92}
\definecolor{lightgrey}{gray}{0.85}
\definecolor{lightmidgrey}{gray}{0.75}
\definecolor{midgrey}{gray}{0.6}
\definecolor{darkmidgrey}{gray}{0.5}
\definecolor{darkgrey}{gray}{0.4}
\definecolor{verydarkgrey}{gray}{0.2}
\definecolor{grey9}{gray}{0.9}
\definecolor{grey8}{gray}{0.8}
\definecolor{grey7}{gray}{0.7}
\definecolor{grey6}{gray}{0.6}
\definecolor{grey5}{gray}{0.5}
\definecolor{grey}{gray}{0.5}
\definecolor{gray}{gray}{0.5}
\definecolor{grey4}{gray}{0.4}
\definecolor{grey3}{gray}{0.3}
\definecolor{grey2}{gray}{0.2}
\definecolor{grey1}{gray}{0.1}
\definecolor{pink}{rgb}{1,0.5,0.5}
\definecolor{lightpink}{rgb}{1,0.8,0.8}
\definecolor{orange}{cmyk}{0,0.61,0.87,0}
\definecolor{lightorange}{cmyk}{0,0.58,0.95,0}
\definecolor{darkgreen}{rgb}{0,0.3,0}
\definecolor{darkred}{rgb}{0.5,0,0}
\definecolor{darkblue}{rgb}{0,0,0.5}
\definecolor{lightgreen}{rgb}{0.6,1,0.4}
\definecolor{lightcyan}{rgb}{0.8,1,0.9}
\definecolor{purple}{rgb}{0.7,0.1,0.7}
\definecolor{darkpurple}{rgb}{0.55,0.05,0.55}
\definecolor{reasonablegreen}{rgb}{0,0.5,0}
\definecolor{black}{rgb}{0,0,0}

\definecolor{header}{gray}{0.75}
\definecolor{subheader}{gray}{0.85}
\definecolor{evenrow}{gray}{0.90}
\definecolor{oddrow}{gray}{1.0}

\definecolor{er}{rgb}{1,1,0.5}
\definecolor{hdm}{rgb}{0.5,1,0.5}
\definecolor{rel}{rgb}{0.5,1,1}
\definecolor{orm}{rgb}{0.8,0.8,0.5}
\definecolor{uml}{rgb}{0.6,0.6,1}
\definecolor{www}{rgb}{1,0.5,1}
\definecolor{xml}{rgb}{0.8,0.8,1}
\definecolor{comment}{rgb}{0.75,0.75,0.75}

\definecolor{GreenYellow}{cmyk}{0.15,0,0.69,0}
\definecolor{Yellow}{cmyk}{0,0,1,0}
\definecolor{Goldenrod}{cmyk}{0,0.10,0.84,0}
\definecolor{Dandelion}{cmyk}{0,0.29,0.84,0}
\definecolor{Apricot}{cmyk}{0,0.32,0.52,0}
\definecolor{Peach}{cmyk}{0,0.50,0.70,0}
\definecolor{Melon}{cmyk}{0,0.46,0.50,0}
\definecolor{YellowOrange}{cmyk}{0,0.42,1,0}
\definecolor{Orange}{cmyk}{0,0.61,0.87,0}
\definecolor{BurntOrange}{cmyk}{0,0.51,1,0}
\definecolor{Bittersweet}{cmyk}{0,0.75,1,0.24}
\definecolor{RedOrange}{cmyk}{0,0.77,0.87,0}
\definecolor{Mahogany}{cmyk}{0,0.85,0.87,0.35}
\definecolor{Maroon}{cmyk}{0,0.87,0.68,0.32}
\definecolor{BrickRed}{cmyk}{0,0.89,0.94,0.28}
\definecolor{Red}{cmyk}{0,1,1,0}
\definecolor{OrangeRed}{cmyk}{0,1,0.50,0}
\definecolor{RubineRed}{cmyk}{0,1,0.13,0}
\definecolor{WildStrawberry}{cmyk}{0,0.96,0.39,0}
\definecolor{Salmon}{cmyk}{0,0.53,0.38,0}
\definecolor{CarnationPink}{cmyk}{0,0.63,0,0}
\definecolor{Magenta}{cmyk}{0,1,0,0}
\definecolor{VioletRed}{cmyk}{0,0.81,0,0}
\definecolor{Rhodamine}{cmyk}{0,0.82,0,0}
\definecolor{Mulberry}{cmyk}{0.34,0.90,0,0.02}
\definecolor{RedViolet}{cmyk}{0.07,0.90,0,0.34}
\definecolor{Fuchsia}{cmyk}{0.47,0.91,0,0.08}
\definecolor{Lavender}{cmyk}{0,0.48,0,0}
\definecolor{Thistle}{cmyk}{0.12,0.59,0,0}
\definecolor{Orchid}{cmyk}{0.32,0.64,0,0}
\definecolor{DarkOrchid}{cmyk}{0.40,0.80,0.20,0}
\definecolor{Purple}{cmyk}{0.45,0.86,0,0}
\definecolor{Plum}{cmyk}{0.50,1,0,0}
\definecolor{Violet}{cmyk}{0.79,0.88,0,0}
\definecolor{RoyalPurple}{cmyk}{0.75,0.90,0,0}
\definecolor{BlueViolet}{cmyk}{0.86,0.91,0,0.04}
\definecolor{Periwinkle}{cmyk}{0.57,0.55,0,0}
\definecolor{CadetBlue}{cmyk}{0.62,0.57,0.23,0}
\definecolor{CornflowerBlue}{cmyk}{0.65,0.13,0,0}
\definecolor{MidnightBlue}{cmyk}{0.98,0.13,0,0.43}
\definecolor{NavyBlue}{cmyk}{0.94,0.54,0,0}
\definecolor{RoyalBlue}{cmyk}{1,0.50,0,0}
\definecolor{Blue}{cmyk}{1,1,0,0}
\definecolor{Cerulean}{cmyk}{0.94,0.11,0,0}
\definecolor{Cyan}{cmyk}{1,0,0,0}
\definecolor{ProcessBlue}{cmyk}{0.96,0,0,0}
\definecolor{SkyBlue}{cmyk}{0.62,0,0.12,0}
\definecolor{Turquoise}{cmyk}{0.85,0,0.20,0}
\definecolor{TealBlue}{cmyk}{0.86,0,0.34,0.02}
\definecolor{Aquamarine}{cmyk}{0.82,0,0.30,0}
\definecolor{BlueGreen}{cmyk}{0.85,0,0.33,0}
\definecolor{Emerald}{cmyk}{1,0,0.50,0}
\definecolor{JungleGreen}{cmyk}{0.99,0,0.52,0}
\definecolor{SeaGreen}{cmyk}{0.69,0,0.50,0}
\definecolor{Green}{cmyk}{1,0,1,0}
\definecolor{ForestGreen}{cmyk}{0.91,0,0.88,0.12}
\definecolor{PineGreen}{cmyk}{0.92,0,0.59,0.25}
\definecolor{LimeGreen}{cmyk}{0.50,0,1,0}
\definecolor{YellowGreen}{cmyk}{0.44,0,0.74,0}
\definecolor{SpringGreen}{cmyk}{0.26,0,0.76,0}
\definecolor{OliveGreen}{cmyk}{0.64,0,0.95,0.40}
\definecolor{RawSienna}{cmyk}{0,0.72,1,0.45}
\definecolor{Sepia}{cmyk}{0,0.83,1,0.70}
\definecolor{Brown}{cmyk}{0,0.81,1,0.60}
\definecolor{Tan}{cmyk}{0.14,0.42,0.56,0}
\definecolor{Gray}{cmyk}{0,0,0,0.50}
\definecolor{Black}{cmyk}{0,0,0,1}
\definecolor{White}{cmyk}{0,0,0,0}

\definecolor{azure}{rgb}{0.94, 1.0, 1.0}